 \newtheorem{thm}{Theorem}[section]
 \newtheorem{lemma}[thm]{Lemma}
 \newtheorem{prop}[thm]{Proposition}
 \theoremstyle{definition}
 \theoremstyle{remark}
 \numberwithin{equation}{section}
\newcommand{\be}{\begin{equation}}
\newcommand{\ee}{\end{equation}}
\newcommand{\bea}{\begin{eqnarray}}
\newcommand{\eea}{\end{eqnarray}}
\newcommand{\abs}[1]{\left\vert #1 \right\vert}
\newcommand{\conj}[1]{\overline{#1}}
\newcommand{\norm}[1]{\left\Vert#1\right\Vert}
\newcommand{\set}[1]{\left\{ #1\right\} }
\newcommand{\ip}[1]{\left\langle #1 \right\rangle}
\newcommand{\C}{\mathbb{C}}
\newcommand{\R}{\mathbb{R}}
\newcommand{\Z}{\mathbb{Z}}
\newcommand{\E}{{\mathbb E}}
\newcommand{\N}{\mathbb{N}}
\newcommand{\UU}{\mathbb{U}}
\newcommand{\PP}{\mathbb{P}}
\newcommand{\eps}{\varepsilon}
\DeclareMathOperator{\tr}{tr}
\DeclareMathOperator{\supp}{supp}
\DeclareMathOperator{\Cl}{Cl}
\DeclareMathOperator{\SL}{SL}
\DeclareMathOperator{\Sp}{Sp}
\definecolor{Plum}{rgb}{.5,0,1}
\numberwithin{equation}{section}
\begin{document}

\title{Localization for random block operators\\ related to the XY spin chain}

\author[J. Chapman]{Jacob Chapman$^1$}
\thanks{Both authors were supported in part by NSF grant DMS-1069320 (PI G.\ Stolz).}
\address{$^1$ Department of Mathematics\\ 
William Carey University\\
Hattiesburg, MS 39401, USA}
\email{jchap12286@gmail.com}

\author[G. Stolz]{G\"unter Stolz$^2$}
\address{$^2$ Department of Mathematics\\
University of Alabama at Birmingham\\
Birmingham, AL 35294, USA}
\email{stolz@math.uab.edu}

\date{}

%
\begin{abstract}

We study a class of random block operators which appear as effective one-particle Hamiltonians for the anisotropic XY quantum spin chain in an exterior magnetic field given by an array of i.i.d.\ random variables. For arbitrary non-trivial single-site distribution of the magnetic field, we prove dynamical localization of these operators at non-zero energy.

\end{abstract}

\maketitle


\section{Introduction} \label{sec:intro}
\markboth{\scriptsize J.\ CHAPMAN AND G.\ STOLZ}{\scriptsize LOCALIZATION FOR RANDOM BLOCK OPERATORS RELATED TO THE XY SPIN CHAIN}
It was first understood in the groundbreaking paper \cite{LSM1961} of Lieb, Schultz, and Mattis in 1961 that the XY spin chain provides one of very few exactly solvable models in the theory of interacting quantum many-body systems. Technically, this is due to the fact that the Jordan-Wigner transform maps the XY chain Hamiltonian to a free Fermion system, i.e.\ a Hamiltonian quadratic in a set of creation and annihilation operators satisfying canonical anti-commutation relations. This works for the isotropic as well as the anisotropic XY chain and also allows to include a transversal exterior magnetic field. It is not necessary that the strength of the spin-couplings and the field are constant along the chain. However, in the variable coefficient case the Jordan-Wigner transform does not lead to an exact solution (diagonalization) of the XY chain, but reduces it to the study of an effective one-particle Hamiltonian.

Consequentially, for the last half century the XY chain has become a prototypical model, often used as a starting point in attempts to understand phenomena in many-body quantum theory, before moving on to the investigation of more complicated (and more realistic) models. This is true, in particular, for attempts to understand the effect of {\it disorder} on many-body quantum systems. In this context a key issue is to learn how to characterize and prove {\it many-body localization}, a concept which is still under close scrutiny in the physics literature, e.g.\ \cite{Baskoetal, OH, ZPP, PalHuse}. 

The work we present here was inspired by the paper \cite{HSS2012} which gave a rigorous proof of dynamical localization for the XY chain in random exterior field, using the concept of a zero-velocity Lieb-Robinson bound to characterize dynamical localization in quantum spin systems.  The main result in \cite{HSS2012} is for the {\it isotropic} XY chain in random field, as in this case the effective one-particle Hamiltonian found via Jordan-Wigner is the Anderson model. Many-body dynamical localization for the disordered XY chain thus arises as a consequence of the strong one-particle dynamical localization bounds known for the Anderson model.

Here we will be concerned with the {\it anisotropic} XY chain in random field. As essentially already contained in \cite{LSM1961} (and briefly reviewed in Section~\ref{sec:xychain}), in this case the Jordan-Wigner transform leads to an effective one-particle Hamiltonian given by a 2$\times$2-block operator. Including a random field leads to Anderson-type diagonal blocks, separate for positive and negative energies, coupled linearly in the anisotropy parameter by off-diagonal blocks.

Random block operators of this type are less well studied than the Anderson model and, due to a lack of monotonicity properties, known properties of the Anderson model do not directly extend to random block operators. However, random block operators have recently found considerable interest, see \cite{KMM2010, ESS2012, GM2013, ES2013}, with physical motivation also provided by the Bogoliubov-de Gennes equation in the mean-field approximation of BCS theory. We also mention that the model given by the anisotropic XY chain has been considered in the quantum information literature under the name {\it Majorana chain}, see \cite{Kitaev} and \cite{BravyiKoenig}. In particular, \cite{BravyiKoenig} considers the effect of disorder in the Majorana chain on quantum memory, assuming that the underlying one-particle Hamiltonian satisfies a strong multi-point dynamical localization condition (for which there are currently no rigorous proofs).

The papers \cite{ESS2012} and \cite{GM2013} have provided localization proofs for random block operators. The authors of \cite{GM2013}, building on \cite{KMM2010}, use a Wegner estimate and a Lifshitz tail bound and adapt the bootstrap multiscale analysis of Germinet and Klein to prove spectral and dynamical localization at internal band edges for a class of random block operators. In \cite{ESS2012} the fractional moment method is adapted to prove localization of the entire spectrum for a large class of random block operators (including those arising from the anisotropic XY chain) in the {\it large disorder} regime.

What drives our motivation here is that the random block operator arising from the XY spin chain is {\it one-dimensional} (the reduction to an effective one-particle Hamiltonian via Jordan-Wigner transform does not work for multi-dimensional XY systems). Thus, suggested by physics as well as past experience, one expects that the entire spectrum should be localized at arbitrary (in particular, small) disorder strength. A proof of this is our goal here.

We start in Section~\ref{sec:xychain} with a brief review of the relation of the XY chain with effective one-particle Hamiltonians in the form of 2$\times$2-block operators. After this we state Theorem~\ref{thm:main}, our main result, before outlining the contents of the rest of the paper at the end of Section~\ref{sec:xychain}.

The present work is based on the thesis \cite{Chap2013} by the first named author. We will frequently refer to \cite{Chap2013} for additional background and more details.


\section{The XY chain, associated block operators, and the main result} \label{sec:xychain}

The anisotropic XY spin chain, as introduced in \cite{LSM1961} for the constant coefficient case, is given by the self-adjoint Hamiltonian
\begin{equation}\label{xychain1}
H_n=\sum_{j=1}^{n-1} \mu_j[(1+\gamma_j)\sigma_j^x\sigma_{j+1}^x+(1-\gamma_j)\sigma_j^y\sigma_{j+1}^y]+\sum_{j=1}^n \nu_j \sigma_j^z,
\end{equation}
which acts on the Hilbert space $\mathcal{H}_n=\bigotimes_{j=1}^n \C^2$.
Here $\set{\mu_j}$, $\set{\gamma_j}$, and $\set{\nu_j}$ are three real-valued sequences, representing the coupling strength, the anisotropy, and the external magnetic field, respectively, and
\begin{equation}  \label{eq:Pauli}
\sigma^x=\left(
\begin{array}{cc}
0 & 1 \\
1 & 0 \\
\end{array}
\right),\quad\quad
\sigma^y=\left(
\begin{array}{cc}
0 & -i \\
i & 0 \\
\end{array}
\right),\quad\quad
\sigma^z=\left(
\begin{array}{cc}
1 & 0 \\
0 & -1 \\
\end{array}
\right)
\end{equation}
are the Pauli matrices. For a $2\times2$ matrix $M$, we use the notation $M_j:=I\otimes\cdots\otimes I\otimes M\otimes I\otimes\cdots\otimes I$, which acts non-trivially only in the $j$th component of $\mathcal{H}_n$. 

We remark that the first sum in (\ref{xychain1}) models the interactions between neighboring spins, as governed by the $x$ and $y$ Pauli matrices. The second sum models an exterior magnetic field acting on the spin system. The {\it anisotropic} case is characterized by $\gamma_j \not= 0$.

Define the raising and lowering operators
\begin{equation} 
a_j^*=\frac12(\sigma_j^x+i\sigma_j^y),\quad\quad a_j=\frac12(\sigma_j^x-i\sigma_j^y),\quad\quad j=1,...,n,
\end{equation}
and the Jordan-Wigner transform
\begin{equation}\label{JWT}
c_1:=a_1,\quad\quad c_j:=\sigma_1^z\cdots \sigma_{j-1}^z a_j,\quad\quad j=2,...,n.
\end{equation}
It turns out that the latter satisfy the canonical anticommutation relations (CAR)
\begin{equation}\label{CAR}
\{c_j,c_k^*\}=\delta_{jk}I,\quad\quad \{c_j,c_k\}=\{c_j^*,c_k^*\}=0, \quad\quad 1\le j,k\le n.
\end{equation}
Referring to \cite{HSS2012} for details and defining the formal vector
\begin{equation} 
\mathcal{C}=(c_1,...,c_n,c_1^*,...,c_n^*)^t,
\end{equation}
one is able to rewrite $H_n$ as
\begin{equation}\label{CMC}
H_n=\mathcal{C}^*\hat M_n\mathcal{C},
\end{equation}
where $\hat M_n$ is the $2\times2$ block matrix
\begin{equation}\label{blockmatrixM}
\hat M_n=\left(
\begin{array}{cc}
A_n & B_n \\
-B_n & -A_n \\
\end{array}
\right)
\end{equation}
with Jacobi matrices
\begin{equation}\label{matrixA}
A_n=\left(
\begin{array}{ccccc}
\nu_1 & -\mu_1 &  &  &  \\
-\mu_1 & \ddots & \ddots &  &  \\
 & \ddots & \ddots & \ddots &  \\
 &  & \ddots & \ddots & -\mu_{n-1} \\
 &  &  & -\mu_{n-1} & \nu_n \\
\end{array}
\right)
\end{equation}
and
\begin{equation}\label{matrixB}
B_n=\left(
\begin{array}{ccccc}
0 & -\mu_1\gamma_1 &  &  &  \\
\mu_1\gamma_1 & \ddots & \ddots &  &  \\
 & \ddots & \ddots & \ddots &  \\
 &  & \ddots & \ddots & -\mu_{n-1}\gamma_{n-1} \\
 &  &  & \mu_{n-1}\gamma_{n-1} & 0 \\
\end{array}
\right).
\end{equation}

The spin Hamiltonian $H_n$ is quadratic in the fermionic creation and annihilations operators $c_j^*$, $c_k$, $1\le j,k \le n$. However, due to anisotropy and the resulting off-diagonal blocks in (\ref{blockmatrixM}), it is not {\it particle number preserving}, i.e.\ the terms in (\ref{CMC}) do not contain equal numbers of $c_j^*$'s and $c_k$'s.

Observe that $A_n^*=A_n^t=A_n$ and $B_n^*=B_n^t=-B_n$ so that we have $\hat M_n^*=\hat M_n^t=\hat M_n$. The self-adjoint block matrix $\hat{M}_n$ can be considered an effective one-particle Hamiltonian for the many-body spin Hamiltonian $H_n$. In particular, as observed in \cite{HSS2012} and further discussed in Section~\ref{sec:discussion} below, one-particle localization properties of $\hat{M}_n$ imply many-body localization for $H_n$.

We will generally rewrite $\hat{M}_n$ as a Jacobi matrix with $2\times 2$-matrix-valued entries (formally obtained by the unitary equivalence under re-ordering the canonical basis $(e_1,e_2,...,e_{2n})$ as $(e_1,e_{n+1},e_2,e_{n+2},...,e_n,e_{2n})$),
\begin{equation}\label{blockmatrixhatM}
M_n := \left(
\begin{array}{cccc}
\nu_1 \sigma^z & -\mu_1 S(\gamma_1) &  &   \\
-\mu_1 S(\gamma_1)^t & \nu_2 \sigma^z & \ddots  &  \\
  & \ddots & \ddots & -\mu_{n-1}S(\gamma_{n-1}) \\
  &  & -\mu_{n-1}S(\gamma_{n-1})^t & \nu_n \sigma^z \\
\end{array}
\right),
\end{equation}
with the Pauli matrix $\sigma^z$ from (\ref{eq:Pauli}) and
\begin{equation} \label{defSgamma}
S(\gamma):=\left(
\begin{array}{cc}
1 & \gamma \\
-\gamma & -1 \\
\end{array}
\right).
\end{equation}

In the following we will think of $M_n$ as an operator on $\ell^2([1,n];\C^2)$, where $[1,n]$ denotes the discrete interval $\{1,\ldots,n\}$.
We may view $M_n$ as a generalized tight-binding Hamiltonian with a (sign-indefinite) potential $\nu_j \sigma^z$ and non-standard hopping terms $-\mu_j S(\gamma_j)$. Mathematically, (\ref{blockmatrixhatM}) provides the possibility to investigate spectral properties of $\hat M_n$ by using a transfer matrix formalism, although of higher order than in the case of standard tri-diagonal Jacobi matrices.

To understand the effects of disorder on spin systems, we are particularly interested in the case where at least one of the sequences $\set{\mu_j}$, $\set{\gamma_j}$, and $\set{\nu_j}$ is random. As the main application of our results below we choose the case of random exterior field (but see Section~\ref{sec:otherrand} for other cases). More precisely, let
\begin{equation} \label{eq:coeffcond1}
\mu_j = 1, \quad \gamma_j = \gamma \in (0,1) \cup (1,\infty) \quad \mbox{for all $j\in \N$},
\end{equation}
and
\begin{equation} \label{eq:coeffcond2}
\begin{array}{c} (\nu_j)_{j\in\N} \:\:\mbox{are i.i.d.\ real random variables with non-trivial,}\\ \mbox{ compactly supported distribution $\rho$}. \end{array}
\end{equation}

Physically most interesting is the case of anisotropy parameter $\gamma \in (0,1)$, but our methods also work for $\gamma>1$. We could also allow negative $\gamma$, which merely changes the roles of the Pauli matrices $\sigma^x$ and $\sigma^y$ in (\ref{xychain1}). For the isotropic XY chain $\gamma=0$, the off-diagonal blocks in (\ref{blockmatrixM}) vanish, reducing $\hat{M}_n$ to the Anderson model. This case was considered in \cite{HSS2012}, where known facts on dynamical localization for the Anderson model were used to establish dynamical localization for the isotropic XY chain in random field, see Section~\ref{sec:discussion} below. For $\gamma=1$ the XY chain (\ref{xychain1}) degenerates into the quantum Ising model. This can also be studied with our methods. In fact, in this case the structure of the underlying effective one-particle operator is simpler, allowing to obtain more complete results, which will be the content of \cite{IsingPaper} (see also more discussion at the end of Section~\ref{sec:discussion} below).

Let $P_j:\ell^2([1,n];\C^2)\to\C^2$ be the projection defined by $P_ju=u(j)$, and denote by $\chi_J(M_n)$ the spectral projection corresponding to $M_n$ onto $J\subset \R$. The following result on dynamical localization for $M_n$ may be considered as the main result of our paper.

\begin{thm} \label{thm:main}
Assume that $M_n$ is given by (\ref{blockmatrixhatM}) with coefficients as in (\ref{eq:coeffcond1}) and (\ref{eq:coeffcond2}).
For every compact interval $J\subset \R \setminus \{0\}$ and every $\zeta\in(0,1)$ there exist constants $C=C(J,\zeta)<\infty$ and $\eta = \eta(J,\zeta)>0$ such that for all $n\in\N$ and $j,k\in[1,n]$,
\begin{equation}\label{DL2}
\E\left(\sup_{t\in\R}\|P_je^{-itM_n} \chi_J(M_n) P_k^*\|\right)\le Ce^{-\eta\abs{j-k}^\zeta}.
\end{equation}
\end{thm}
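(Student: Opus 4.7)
My plan follows the classical strategy for localization of random operators in one dimension: establish positivity of Lyapunov exponents via a Furstenberg-type theorem, establish a Wegner-type estimate that copes with the indefinite block potential, and then combine these via multi-scale analysis.  Two features distinguish this from the standard Anderson setting:  the transfer matrices are four-dimensional (because $M_n$ acts on $\ell^2([1,n];\C^2)$), and the on-site potential $\nu_j\sigma^z$ is not sign-definite, which blocks the usual monotonicity arguments and is the reason the problem is harder than isotropic XY.

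First I would derive the transfer matrix formalism.  For $|\gamma|\neq 1$, $S(\gamma)$ is invertible, so the block three-term recursion $-\mu_j S(\gamma) u(j+1) + (\nu_j\sigma^z - E)u(j) - \mu_{j-1} S(\gamma)^t u(j-1)=0$ yields a first-order recursion $U(j+1)=T_j(E)U(j)$ on $\C^4$, with transfer matrices $T_j(E)\in\GL(4,\R)$ depending on $\nu_j$ alone.  I would then identify the bilinear form preserved by $T_j(E)$ (a pseudo-symplectic structure inherited from the chiral symmetry $\tau \hat M_n\tau=-\hat M_n$ with $\tau=\begin{pmatrix}0&I\\ I&0\end{pmatrix}$), forcing the four Lyapunov exponents to come in pairs $\pm\lambda_1(E),\pm\lambda_2(E)$ with $\lambda_1\geq\lambda_2\geq 0$.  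The goal of the Furstenberg step is to show $\lambda_2(E)>0$ for every $E\neq 0$; by Goldsheid--Margulis for symplectic-like cocycles, this reduces to verifying irreducibility of the group generated by $\{T_j(E):\nu_j\in\supp\rho\}$ and the absence of an invariant probability measure on the relevant Lagrangian Grassmannian.  The restriction $E\neq 0$ enters naturally here: the chiral symmetry pairs $\pm E$ in the spectrum and forces the gap between Lyapunov exponents to close at $E=0$, so irreducibility must (and can only) be checked away from zero using non-triviality of $\rho$.

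Next I would produce a Wegner-type estimate, of the form $\E\|\chi_I(M_n)\| \leq C |I|^\alpha$ for intervals $I$ in a compact $J\subset\R\setminus\{0\}$ with some $\alpha>0$.  The indefiniteness of $\sigma^z$ prevents a direct spectral-averaging argument, but away from $E=0$ one expects to split the Green's function contribution into its positive and negative spectral parts and apply a monotone spectral-averaging on each part separately; alternatively, one can try Stollmann's covering lemma adapted to the block structure, or exploit the transfer-matrix Thouless-type relation between the density of states and $\lambda_1+\lambda_2$.  This is the step I expect to be the main obstacle, since the failure of sign-definiteness is precisely what makes the entire class of random block operators more delicate than the Anderson model.

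Finally I would convert positivity of the Lyapunov exponents into an initial length-scale estimate through a large deviation bound for the cocycle (à la Le Page / Bougerol--Lacroix for symplectic products), giving $\|G_{[1,L]}^{M_n}(E;j,k)\|\leq e^{-\eta L}$ with high probability and polynomially many ``bad'' sites.  Feeding this initial estimate together with the Wegner bound from Step 3 into the bootstrap multi-scale analysis of Germinet--Klein (as adapted in \cite{GM2013} to block operators) yields exponential decay of fractional moments of the Green's function at energies in $J$, uniformly in $n$.  Dynamical localization in the stretched-exponential form (\ref{DL2}) follows from the standard Germinet--Klein transfer, with the parameter $\zeta\in(0,1)$ appearing as usual from the multi-scale induction.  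Working with finite-volume operators $M_n$ throughout, one obtains the stated estimate uniformly in $n$ and $j,k\in[1,n]$.
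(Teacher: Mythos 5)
Your overall road map is the right one and matches the paper's strategy: transfer matrices, a Furstenberg/Gol'dsheid--Margulis step to get contractivity and strong irreducibility of the Furstenberg group at non-zero energy, a Wegner estimate, an initial length-scale estimate via large deviations, and then the Germinet--Klein bootstrap MSA. But there are a few concrete issues that would need repair.

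First, the step you flag as the main obstacle -- the Wegner estimate -- is indeed the crux, and the option you list last (a Thouless-formula route) is the one the paper actually takes, following Klein--Lacroix--Speis. What you do not account for is that the existing Thouless formula for block operators (Craig--Simon, Kotani--Simon) applies only when the off-diagonal blocks are identity matrices, as on the strip. Here the hopping is $-\mu_j S(\gamma_j)$, and the paper must first \emph{prove} a new Thouless formula for general ergodic block Jacobi matrices with non-standard hopping, in which an extra term $-\frac{1}{\ell}\E(\log|\det g|)$ appears. This is a genuine technical contribution that your plan would have to replicate; the alternative Wegner strategies you sketch (splitting into positive/negative spectral parts, Stollmann's lemma) are not developed in the paper and are not obviously workable given the loss of monotonicity in $\nu_j$.

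Second, your stated reason for excluding $E=0$ is not correct. You claim the chiral symmetry ``forces the gap between Lyapunov exponents to close at $E=0$.'' In fact the paper proves (Theorem~\ref{thm:zeroenergy}) that $\gamma_1(0)>\gamma_2(0)\geq 0$ and typically $\gamma_2(0)>0$ as well. The obstruction at $E=0$ is that the Furstenberg group $G_{\mu_0}$ fails to be strongly irreducible in $\R^4$ -- concretely, the transfer matrices block-diagonalize after conjugation -- so assumption (\ref{DefCI}) fails even though the exponents remain separated. Irreducibility enters not only in positivity of Lyapunov exponents but also in the proof of their H\"older continuity, which feeds into the Wegner estimate.

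Third, you say the MSA ``yields exponential decay of fractional moments of the Green's function.'' This conflates the multiscale route with the Aizenman--Molchanov fractional moment method. The paper explicitly contrasts the two: the fractional moment approach (as in \cite{ESS2012}) gives $\zeta=1$ but requires large disorder and smooth single-site distributions, whereas the KLS/Germinet--Klein MSA route used here handles singular $\rho$ at arbitrary disorder at the price of $\zeta<1$. Also, the bootstrap MSA is applied directly in the form of \cite{GeKle2001} and \cite{Klein2008} following \cite{KLS1990}; the citation to \cite{GM2013} is not the mechanism here (that paper proves band-edge localization via Lifshitz tails for a different class of block operators).
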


Before proceeding we have several remarks. First, while we have chosen to state (\ref{DL2}) as a result on dynamical localization, what will really be proven is the stronger bound
\begin{equation}\label{DLgen}
\E\left(\sup_{|g|\le 1}\|P_j g(M_n) \chi_J(M_n) P_k^*\|\right)\le Ce^{-\eta\abs{j-k}^\zeta},
\end{equation}
where $g$ is an arbitrary function whose modulus is pointwise bounded by $1$ (in the finite volume case considered here, one does not even need to assume measurabilty of $g$).

Second, the bounds (\ref{DL2}) and (\ref{DLgen}) (for Borel functions $g$) hold under the same assumptions also for the infinite volume operators $M_{\nu,\gamma}$ introduced in Section~\ref{sec:infvol} below. One way to see this is by a limiting argument, using that the constants in the bound (\ref{DLgen}) do not depend on the size $n$ of the finite system (see, e.g., the argument in Section~6 of \cite{Stolz2011}). It further follows that the infinite volume operator almost surely has pure point spectrum (see the discussion of the hierarchy of localization properties in \cite{Klein2008}). Here it is not a problem that a vicinity of $E=0$ has to be excluded in the proof of dynamical localization, as one can exhaust $\R \setminus \{0\}$ by countably many compact intervals and the point $E=0$ alone cannot carry any continuous spectrum.

What is less clear, however, is if the exclusion of zero energy in Theorem~\ref{thm:main}, i.e.\ the need for the projection $\chi_J(M_n)$ in (\ref{DL2}) and (\ref{DLgen}), can be removed. This is due to a singularity of the transfer matrices at $E=0$. While the Lyapunov exponents will typically stay positive, the induced dynamical system loses its irreducibility. The exclusion of $E=0$ in the dynamical localization bound for the single-particle Hamiltonian is the reason that we currently cannot deduce many-body dynamical localization for the general anisotropic XY chain from Theorem~\ref{thm:main}, similar to what was done in \cite{HSS2012} for the isotropic chain. However, if the magnetic field $(\nu_j)$ is sufficiently strong, then it will be easily seen that $M_n$ has a spectral gap at $E=0$, so that in this special case Theorem~\ref{thm:main} will indeed lead to many-body dynamical localization for the anisotropic XY chain. We discuss this in more detail in our concluding Section~\ref{sec:discussion}, see in particular Theorem~\ref{thm:dynlocanisoXY}.

The other remaining sections are organized as follows: 

In Section~\ref{sec:infvol} we discuss some basic properties of the infinite volume operator corresponding to $M_n$. While this is not needed for the proof of Theorem~\ref{thm:main}, it gives some insight into the consequences of indefiniteness of the diagonal terms in (\ref{blockmatrixhatM}), in particular on the structure of the almost sure spectrum.

Sections~\ref{sec:Thouless} to \ref{sec:app} contain the proof of Theorem~\ref{thm:main}. We follow a strategy initially developed for the Anderson model on the strip in \cite{KLS1990}, meaning that the proof of localization is essentially reduced to showing Wegner and initial length estimates, from which bounds such as (\ref{DLgen}) follow by the bootstrap multiscale analysis of \cite{GeKle2001}. An advantage of this approach is that it allows to handle singular single-site distributions as in (\ref{eq:coeffcond2}). However, this comes at the price of having to choose $\zeta<1$ in (\ref{DL2}) and (\ref{DLgen}). In this context we should mention the work \cite{ESS2012}, where dynamical localization for a class of random block operators is proven, which includes our model, but requires large disorder and smooth single-site distributions. This work uses an adaptation of the fractional moment (or Aizenman-Molchanov) method and works for $\zeta=1$.

A key ingredient to deriving Wegner and initial length estimates is the Thouless formula, which we prove in Section~\ref{sec:Thouless} for block Jacobi matrices with general ergodic diagonal and off-diagonal terms. Previously, the Thouless formula for block Jacobi matrices has been shown only for the case of the Anderson model on a strip, where the off-diagonal blocks are identity matrices (see Section~\ref{sec:Thouless} for more discussion of earlier works).

In Section~\ref{sec:DL} we state Theorem~\ref{BJMdynloc}, a result on dynamical localization for block Jacobi matrices with i.i.d.\ diagonal and off-diagonal blocks, under the assumption that the F\"urstenberg groups associated with the transfer matrices have suitable contraction and irreducibility properties. We discuss its proof, which is patterned after the approach of \cite{KLS1990} for the Anderson model on strips. 

The proof of Theorem~\ref{thm:main} is completed in Section~\ref{sec:app} by showing that for the model (\ref{blockmatrixhatM}) the F\"urstenberg group has the properties required in Theorem~\ref{BJMdynloc}, using a criterion of Gol'dsheid and Margulis on Zariski-denseness of F\"urstenberg groups in the symplectic group. Here we also show, in Section~\ref{subsec:crit}, that at zero-energy the F\"urstenberg group is not Zariski-dense, while typically Lyapunov exponents are still positive. Without stating explicit results, we discuss in Section~\ref{sec:otherrand} what our methods yield if either the coupling constants $\mu_j$ or the anisotropy parameters $\gamma_j$ in (\ref{blockmatrixhatM}) are chosen random while keeping the other two parameter sets constant.

\section{The infinite volume operator and its basic properties} \label{sec:infvol}

For applications to the {\it finite} XY chain (\ref{xychain1}) we need to look at the finite volume operators $\hat M_n$ in (\ref{blockmatrixM}) (for arbitrary $n$). But in the study of the latter it is natural to also consider the (bounded, self-adjoint) infinite volume operator \begin{equation}\label{blockoperatorH}
\hat {M}_{\nu,\gamma}=\left(
\begin{array}{cc}
A_\nu & \gamma B \\
-\gamma B & -A_\nu \\
\end{array}
\right)
\end{equation}
on $\ell^2(\Z)\oplus\ell^2(\Z)$. Here $\nu = (\nu_j)_{j\in\Z}$ are i.i.d.\ random variables as in (\ref{eq:coeffcond2}) and the operators $A_\nu$ and $B$ on $\ell^2(\Z)$ are defined by
\begin{eqnarray}
(A_\nu u)(n) &=& -u(n+1)-u(n-1)+\nu_n u(n) \\
(Bu)(n) &=& u(n-1)-u(n+1) \nonumber
\end{eqnarray}
for all $n\in \Z$. This is unitarily equivalent to the random block Jacobi matrix
\begin{equation}\label{blockJacobi}
{M}_{\nu,\gamma}:= \left(
\begin{array}{ccccc}
\ddots & \ddots &  &  &  \\
\ddots & \nu_{-1} \sigma^z & -S(\gamma) &  &  \\
 & -S(\gamma)^t & \nu_0 \sigma^z & -S(\gamma) &  \\
 &  & -S(\gamma)^t & \nu_1 \sigma^z & \ddots \\
 &  &  & \ddots & \ddots \\
\end{array}
\right)
\end{equation}
in $\ell^2(\Z;\C^2)$.


We begin by mentioning some basic spectral properties of general block operators of the form (\ref{blockoperatorH}). For this let $\mathcal{H}$ be a Hilbert space, and let $A\in B(\mathcal{H})$ be self-adjoint and $B\in B(\mathcal{H})$ be skew-adjoint (i.e. $B^*=-B$). Then the block operator
\begin{equation} \label{generalblock}
\hat{M}:=\left(
\begin{array}{cc}
A & B \\
-B & -A \\
\end{array}
\right)
\end{equation}
is bounded and self-adjoint on the Hilbert space $\mathcal{H}\oplus\mathcal{H}$. It is easy to see that $\|\hat{M}\| \le \|A\| + \|B\|$ and thus $\sigma(\hat{M})\subset[-\|A\|-\|B\|,\|A\|+\|B\|]$. Also, with the unitary
\begin{equation}
U:= \left( \begin{array}{cc} 0 & I \\ I & 0 \end{array} \right)
\end{equation}
one has $U^* \hat{M} U = -\hat{M}$, so that $\sigma(\hat{M}) = -\sigma(\hat{M})$. Finally, we have

\begin{prop}\label{specsymlemma} 
Let $\hat{M}$ be given by (\ref{generalblock}). If there exists a $\lambda >0$ such that $A\ge \lambda$ or $-A\ge\lambda$, then
\begin{equation} 
\sigma(\hat{M})\cap(-\lambda,\lambda)=\emptyset.
\end{equation}
\end{prop}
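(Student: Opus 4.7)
The plan is to show $\hat M^2 \ge \lambda^2 I$ as operators on $\mathcal H \oplus \mathcal H$; by the spectral mapping theorem applied to the continuous function $t \mapsto t^2$, and since $\hat M$ is self-adjoint, this is equivalent to $\sigma(\hat M) \subset (-\infty,-\lambda] \cup [\lambda,\infty)$, which is the desired conclusion.

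First I would conjugate $\hat M$ by the Hadamard-style unitary $V = \frac{1}{\sqrt{2}} \mat{I}{I}{I}{-I}$ on $\mathcal H \oplus \mathcal H$, which satisfies $V = V^*$ and $V^2 = I$. A short computation gives $V \hat M V = \mat{0}{T^*}{T}{0}$, where $T := A+B$ and $T^* = A-B$ (using $A^* = A$ and $B^* = -B$). Squaring produces the block-diagonal operator $\mat{T^*T}{0}{0}{TT^*}$, so the task reduces to the two lower bounds $T^*T \ge \lambda^2 I$ and $TT^* \ge \lambda^2 I$.

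For these bounds the key observation is that $\langle B\xi, \xi\rangle$ is purely imaginary whenever $B$ is skew-adjoint, so $\Re \langle T\xi, \xi\rangle = \langle A\xi, \xi\rangle$ for every $\xi \in \mathcal H$. The hypothesis $\pm A \ge \lambda I$ then supplies $|\langle T\xi, \xi\rangle| \ge \lambda \|\xi\|^2$, and Cauchy-Schwarz gives $|\langle T\xi, \xi\rangle| \le \|T\xi\|\|\xi\|$, so that $\|T\xi\| \ge \lambda\|\xi\|$, i.e.\ $T^*T \ge \lambda^2 I$. Re-running the same argument with $B$ replaced by the still skew-adjoint $-B$ yields $\|T^*\xi\| \ge \lambda\|\xi\|$ and hence $TT^* \ge \lambda^2 I$, completing the proof.

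The step I would expect to be the main obstacle is precisely the one that the Hadamard unitary $V$ sidesteps. A direct attack on $\hat M^2$ produces diagonal blocks $A^2 - B^2 \ge A^2 \ge \lambda^2 I$, which are under control, together with off-diagonal blocks $[A,B]$ whose sign is not manifest and which seem to spoil positivity. Rotating the two copies of $\mathcal H$ against each other via $V$ is exactly what decouples these commutator contributions into the factorised forms $T^*T$ and $TT^*$, after which the skew-adjointness of $B$ is the only thing one needs to extract the lower bound from the definite part $A$ alone.
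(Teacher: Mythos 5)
Your proof is correct. The paper itself does not include a proof of Proposition~\ref{specsymlemma}---it defers to Chapter 3 of \cite{Chap2013}, which in turn adapts an argument from \cite{KMM2010}---so a line-by-line comparison is not possible; but your argument is a clean, self-contained route to the statement. The Hadamard conjugation $V\hat M V = \bigl(\begin{smallmatrix}0 & T^*\\ T & 0\end{smallmatrix}\bigr)$ with $T=A+B$ is exactly the right device: it replaces the problematic commutator blocks $[A,B]$ that appear in $\hat M^2$ directly with the factorised diagonal blocks $T^*T$ and $TT^*$, and the numerical-range estimate $\|T\xi\|\,\|\xi\|\ge|\langle T\xi,\xi\rangle|\ge|\Re\langle T\xi,\xi\rangle|=|\langle A\xi,\xi\rangle|\ge\lambda\|\xi\|^2$ (valid under either $A\ge\lambda$ or $-A\ge\lambda$, using skew-adjointness of $B$ so that $\langle B\xi,\xi\rangle$ is purely imaginary) correctly delivers $T^*T\ge\lambda^2 I$ and, after replacing $B$ by $-B$, $TT^*\ge\lambda^2 I$. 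From $\hat M^2\ge\lambda^2 I$ the spectral mapping theorem gives the claimed spectral gap. One could shorten the argument slightly by noting that $\sigma(T^*T)\setminus\{0\}=\sigma(TT^*)\setminus\{0\}$ and that $0\notin\sigma(T^*T)$ here, so the second bound is automatic; but the explicit rerun with $-B$ is equally valid and perhaps clearer.
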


This is proven in Chapter 3 of \cite{Chap2013} and adapted from a result in \cite{KMM2010} for a similar class of block operators.

Returning to the random block operators $\hat{M}_{\nu,\gamma}$ and the unitarily equivalent random block Jacobi matrices $M_{\nu,\gamma}$, Proposition~\ref{specsymlemma} and the remark preceding it establish symmetry of the spectrum about 0 and some basic spectral inclusions for {\it every} choice of the i.i.d.\ random parameters $\nu = (\nu_j)$. Moreover, as for every $\gamma \in \R$ the family $\{{M}_{\nu,\gamma}\}_{\nu= (\nu_j)}$ is ergodic with respect to shifts in $\ell^2(\Z;\C^2)$ (for a discussion of general ergodic block Jacobi matrices see Section~\ref{sec:Thouless} below), there exists a closed subset $\Sigma_\gamma$ of $\R$, called the {\it almost sure spectrum}, such that 
\begin{equation} 
\Sigma_\gamma=\sigma({M}_{\nu,\gamma}) \quad \mbox{for almost every $\nu$}.
\end{equation}

For the Anderson model $A_{\nu}$ the almost sure spectrum is explicitly given by $[-2,2]+\supp \rho$, e.g.\ \cite{CL1990}. This can be understood as saying that the almost sure spectrum is generated as the union of all spectra where the potential takes a constant value in supp$\,\rho$. 

The non-monotonicity of the block Jacobi matrix $M_{\nu,\gamma}$ in the random parameters $\nu_j$ makes the description of the almost sure spectrum $\Sigma_{\gamma}$ for $\gamma \not= 0$ more complicated. By extending well-known arguments (e.g.\ \cite{Kir1989}) it is not hard to show that $M_{\nu,\gamma}$ satisfies a {\it periodic support theorem}. For this, denote
\begin{equation} 
S_{\text{per}} := \{V:\Z\to\R: V \text{ periodic}, V(n)\in\text{supp}\,\rho \ \text{for all} \ n\in\Z\}.
\end{equation}
\begin{thm}[Periodic Support Theorem]\label{PST}
We have
\begin{equation}
\Sigma_\gamma=\overline{\bigcup_{V\in S_{\text{\rm per}}}\sigma({M}_{V,\gamma})}.
\end{equation}
\end{thm}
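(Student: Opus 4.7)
The plan is to mimic the classical proof of the periodic support theorem for ergodic Schr\"odinger operators (cf.\ \cite{Kir1989}), adapted to the block Jacobi setting. The decisive structural observation is that the hopping blocks $S(\gamma)$ in $M_{\nu,\gamma}$ are deterministic and only couple nearest neighbors, so $(M_{\nu,\gamma}\psi)(j)$ depends on the random parameters solely through $\nu_j$. In particular, if $\psi$ has support in a finite interval $I\subset\Z$ and two potentials $\nu,\nu'$ agree on $I$, then $M_{\nu,\gamma}\psi=M_{\nu',\gamma}\psi$. I will establish the two inclusions separately.

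For $\supseteq$, fix $V\in S_{\text{per}}$ and $\lambda\in\sigma(M_{V,\gamma})$. Using Floquet theory for the periodic operator $M_{V,\gamma}$ (or truncating a Weyl sequence by a cutoff and controlling the commutator), I produce finitely supported unit vectors $\psi_k\in\ell^2(\Z;\C^2)$ with $\supp\psi_k\subseteq I_k=[a_k,b_k]$ and $\|(M_{V,\gamma}-\lambda)\psi_k\|\to 0$. Since each $V_j\in\supp\rho$, the open box
\[ B_k := \{y\in\R^{|I_k|}:\max_{j\in I_k}|y_j-V_j|<1/k\} \]
has positive product measure, and by Borel--Cantelli applied to disjoint shifted blocks, for almost every $\nu$ there exist arbitrarily large $m\in\N$ with $(\nu_{a_k+m},\dots,\nu_{b_k+m})\in B_k$. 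The translated vectors $\psi_k(\,\cdot\,-m)$ then serve as approximate eigenvectors for $M_{\nu,\gamma}$ at $\lambda$, the only discrepancy being a diagonal block perturbation of norm at most $1/k$. Hence $\lambda\in\sigma(M_{\nu,\gamma})=\Sigma_\gamma$ a.s.; applying this to a countable dense subset of $\sigma(M_{V,\gamma})$ on a single full-measure set and using closedness of $\Sigma_\gamma$ yields $\sigma(M_{V,\gamma})\subseteq\Sigma_\gamma$.

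For $\subseteq$, fix a typical $\nu$ with $\sigma(M_{\nu,\gamma})=\Sigma_\gamma$ and every $\nu_j\in\supp\rho$ (a full-measure event), and let $\lambda\in\Sigma_\gamma$. By Weyl's criterion choose unit vectors $\tilde\psi_k\in\ell^2(\Z;\C^2)$ with $\|(M_{\nu,\gamma}-\lambda)\tilde\psi_k\|\to 0$. Multiplying by the indicator of a sufficiently large window $[-L_k,L_k]$ yields compactly supported $\psi_k$ with $\|\psi_k\|\to 1$ and $\|(M_{\nu,\gamma}-\lambda)\psi_k\|\to 0$, since $M_{\nu,\gamma}$ is bounded and the tails of $\tilde\psi_k$ contribute $o(1)$. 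Now let $V^{(k)}\in S_{\text{per}}$ be any periodic extension (period $2L_k+1$, say) of $(\nu_j)_{|j|\le L_k}$; then $V^{(k)}$ agrees with $\nu$ on $\supp\psi_k$, and the locality observation gives $M_{V^{(k)},\gamma}\psi_k=M_{\nu,\gamma}\psi_k$. Hence $\dist(\lambda,\sigma(M_{V^{(k)},\gamma}))\to 0$, placing $\lambda$ in the right-hand closure.

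The main obstacle is the $\supseteq$ direction when $\rho$ has no atoms: exact agreement of random and periodic coefficients on a finite window has probability zero, so the Borel--Cantelli step must be carried out with approximate agreement and the residual diagonal perturbation transferred through the block structure in operator norm. This last transfer is harmless: since the diagonal blocks are $\nu_j\sigma^z$ with $\|\sigma^z\|=1$, a pointwise bound $|\nu_{j+m}-V_j|<1/k$ on $\supp\psi_k(\,\cdot\,-m)$ translates immediately into a $1/k$-bound on the perturbation applied to $\psi_k(\,\cdot\,-m)$, and everything else reduces to standard bookkeeping.
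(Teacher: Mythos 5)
Your argument is correct and is exactly the ``well-known argument'' the paper invokes (citing Kirsch's lecture notes and deferring details to Chapter~4 of the first author's thesis): for $\supseteq$, use a Borel--Cantelli argument to find, almost surely, far-away windows on which the random potential is $1/k$-close to the periodic pattern, then transfer compactly supported approximate eigenvectors of $M_{V,\gamma}$ and absorb the error as a diagonal perturbation; for $\subseteq$, truncate a Weyl sequence and extend the resulting finite block of $\nu$ periodically. Your key observation that the randomness enters only through the diagonal blocks $\nu_j\sigma^z$ (with $\|\sigma^z\|=1$), so the classical scalar argument carries over verbatim to the block setting, is precisely what makes this routine, and I see no gaps.
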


For a proof, using well-known arguments, we refer to Chapter 4 of \cite{Chap2013}. It is not generally true that constant potentials, i.e.\ the operators $M_{c,\gamma}$ where $c$ is a constant in supp$\,\rho$, suffice to generate the entire almost sure spectrum. However, a positive result in this direction is the following.

\begin{thm}\label{equalsunionoverconstant}
If $\gamma\in[0,1]$, and $\supp\,\rho=[a,b]$ with $2\le a<b$, then
\begin{equation} 
\Sigma_\gamma=\bigcup_{c\in\supp\,\rho}\sigma({M}_{c,\gamma}).
\end{equation}
\end{thm}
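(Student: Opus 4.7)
The plan is to show that both sides coincide with the closed set
\[
I := [-(b+2),\, -(a-2)] \cup [a-2,\, b+2].
\]
One inclusion is immediate from Theorem~\ref{PST}: every constant $V \equiv c$ with $c \in [a,b]$ belongs to $S_{\mathrm{per}}$, so $\bigcup_{c \in [a,b]} \sigma(M_{c,\gamma}) \subseteq \Sigma_\gamma$. To identify the left-hand side with $I$, I Fourier-diagonalize $M_{c,\gamma}$: using $S(\gamma)+S(\gamma)^t = 2\sigma^z$ and $S(\gamma)-S(\gamma)^t = 2\gamma\, i\sigma^y$, the block symbol at momentum $k$ reduces to $(c-2\cos k)\sigma^z + 2\gamma(\sin k)\sigma^y$, a real combination of anti-commuting Pauli matrices, whose eigenvalues are $\pm\sqrt{(c-2\cos k)^2 + 4\gamma^2\sin^2 k}$. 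A short calculus check shows that for $c \geq 2$ and $\gamma \in [0,1]$ the radicand is monotone decreasing in $\cos k \in [-1,1]$, giving $\sigma(M_{c,\gamma}) = [-(c+2),-(c-2)] \cup [c-2, c+2]$; the continuous union over $c \in [a,b]$ is then $I$.

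For the reverse inclusion, Theorem~\ref{PST} and closedness of $I$ reduce the task to showing $\sigma(M_{V,\gamma}) \subseteq I$ for every periodic $V : \Z \to [a,b]$. The key structural observation is that the orthogonal unitary
\[
U := \frac{1}{\sqrt{2}}\begin{pmatrix} I & I \\ I & -I \end{pmatrix}
\]
on $\ell^2(\Z) \oplus \ell^2(\Z)$ conjugates $\hat M_{V,\gamma}$, using $B^* = -B$, into the off-diagonal form
\[
U^* \hat M_{V,\gamma}\, U = \begin{pmatrix} 0 & L^* \\ L & 0 \end{pmatrix}, \qquad L := A_V + \gamma B.
\]
Combined with the symmetry $\sigma(\hat M_{V,\gamma}) = -\sigma(\hat M_{V,\gamma})$ noted before Proposition~\ref{specsymlemma}, this identifies the squared spectrum of $\hat M_{V,\gamma}$ with $\sigma(L^* L)$, so it suffices to show $\sigma(L^*L) \subseteq [(a-2)^2, (b+2)^2]$.

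The upper bound follows from the decomposition $L = V + K$, where $K := -J + \gamma B$ is a constant-coefficient Fourier multiplier (with $Ju(n) := u(n+1) + u(n-1)$) whose symbol $-2\cos k - 2i\gamma\sin k$ has modulus $\sqrt{4\cos^2 k + 4\gamma^2 \sin^2 k} \leq 2$ precisely because $\gamma \in [0,1]$; together with $\|V\|_\infty \leq b$, this gives $\|L\| \leq b+2$. The lower bound is Proposition~\ref{specsymlemma} applied with $\lambda = a-2$: since $A_V \geq a-2$, it yields $\sigma(\hat M_{V,\gamma}) \cap (-(a-2), a-2) = \emptyset$ when $a > 2$ (the case $a=2$ being vacuous), equivalent to $L^* L \geq (a-2)^2$.

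The main obstacle is the sharp norm bound $\|L\| \leq b+2$; applying the triangle inequality directly to $\hat M_{V,\gamma}$ only yields $\|A_V\| + \gamma\|B\| \leq b + 2 + 2\gamma$, which is too crude. The singular-value reduction via $U$ and the hypothesis $\gamma \leq 1$, which is exactly what ensures $\|K\| \leq 2$, are therefore both essential; likewise, the hypothesis $a \geq 2$ is precisely what is required for Proposition~\ref{specsymlemma} to open the spectral gap at the origin.
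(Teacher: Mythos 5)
Your proof is correct, and each step checks out. The Fourier computation of the symbol $(c-2\cos k)\sigma^z + 2\gamma(\sin k)\sigma^y$ and the monotonicity of the radicand in $\cos k$ (which holds precisely because $c\ge 2$ and $\gamma\le 1$) correctly give $\sigma(M_{c,\gamma}) = [-(c+2),-(c-2)]\cup[c-2,c+2]$, hence the union over $c\in[a,b]$ equals $I$. The conjugation $U^*\hat M_{V,\gamma}U = \left(\begin{smallmatrix} 0 & L^* \\ L & 0\end{smallmatrix}\right)$ with $L=A_V+\gamma B$ is an identity you can verify directly, and the Fourier-multiplier bound $\|-J+\gamma B\|\le 2$ (rather than the crude $2+2\gamma$) is exactly what makes the endpoint $b+2$ sharp; the gap at the origin comes from Proposition~\ref{specsymlemma} applied with $\lambda=a-2$, which is vacuous but harmless when $a=2$.

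The paper itself only sketches the argument, deferring to Chapter~4 of the thesis and highlighting that $a\ge 2$ places the block operator in the gapped regime of Proposition~\ref{specsymlemma}, so that the spectra of $\pm A_\nu$ do not overlap. Your use of Proposition~\ref{specsymlemma} for the lower edge of $I$ is therefore in line with what the authors indicate is the essential ingredient. The off-diagonal reduction via $U$ and the resulting singular-value picture $\sigma(\hat M)^2 \subset \sigma(L^*L)\cup\sigma(LL^*)$, together with the Fourier-multiplier norm bound, give a clean and self-contained way to pin down the outer edge $b+2$ that is at least as transparent as what the paper alludes to, and it makes explicit where the hypothesis $\gamma\in[0,1]$ is used. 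One minor wording issue: when you say you are identifying ``the left-hand side'' with $I$, you are in fact computing the right-hand side $\bigcup_c\sigma(M_{c,\gamma})$ of the theorem's equation; the mathematics is unambiguous, but the phrasing could mislead a reader.
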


We again refer to Chapter 4 of \cite{Chap2013} for a full proof, where an important ingredient is that the assumption $2\le a < b$ means that the block operator is in the gapped case of Proposition~\ref{specsymlemma}, meaning that the spectra of the diagonal blocks do not overlap.  If we remove the assumption $a\ge 2$, we can find examples where periodic potentials generate more spectrum than just the constant potentials. For example, if $\gamma=1/2$ and $\supp\,\rho=[-1,1]$, it is shown in \cite{Chap2013} that
\begin{equation} \label{eq:constgap}
\bigcup_{c\in\text{supp}\,\rho}\sigma({M}_{c,1/2})
=\left[-3,-\sqrt{2/3}\right] \cup\left[\sqrt{2/3},3\right] \subsetneq [-3,3]=\Sigma_{1/2}.
\end{equation}
In fact, the periodic potential $V=(...,-1,1,-1,1,...)$ fills the spectral gap $(-\sqrt{2/3}, \sqrt{2/3})$ left in (\ref{eq:constgap}), as $\sigma({M}_{V,1/2})=[-\sqrt5,\sqrt5]$. Examples such as these lead to the question of whether one can prove that 2-periodic potentials are always enough, and if this characterizes the almost sure spectrum in cases other than the one covered in Theorem~\ref{equalsunionoverconstant}.

\section{A Thouless formula} \label{sec:Thouless}

As explained at the end of Section~\ref{sec:xychain}, we prove dynamical localization for the random block Jacobi matrices (\ref{blockmatrixhatM}) by adapting the strategy used in \cite{KLS1990} to prove localization for the Anderson model on a strip. A core part of this strategy, allowing to deduce regularity of the integrated density of states from regularity of the Lyapunov exponents, is the Thouless formula. 

For the strip case, two different proofs of the Thouless formula are in the literature, the original one by Craig and Simon in \cite{CS1983}, and a later proof by Kotani and Simon in \cite{KoSi1988}, where the Thouless formula arises quite naturally out of an extension of large parts of Kotani theory to the strip. In both of these works the off-diagonal blocks are chosen as identity operators, as is the case for the Anderson model on a strip. Here we need to discuss how to extend the Thouless formula to more general off-diagonal blocks. While a case could be made for following Kotani-Simon and extending their work to general block Jacobi matrices, we will follow the original approach of \cite{CS1983}.

We do this for general ergodic block Jacobi matrices. Let $(\Omega,\mathcal{F},\PP)$ be a complete probability space, $\ell \in \N$, and let $f,g:\Omega\to\R^{\ell\times\ell}$ be measurable such that for a.e.\ $\omega\in\Omega$, $f(\omega)$ is symmetric and $g(\omega)$ is invertible. Let us further assume that there exists $D\in (0,\infty)$ such that
\begin{equation}\label{boundedassumption2}
\norm{f(\omega)}+\norm{g(\omega)}+\|g(\omega)^{-1}\|\le D\quad\text{almost surely.}
\end{equation}
Let $T:\Omega\to\Omega$ be an ergodic bijection. Then we define the random operator
\begin{equation}\label{BJM}
{M} = M(\omega) =\left(\begin{array}{ccccc}
\ddots & \ddots & & & \\
\ddots & V_{-1} & -S_{-1} & & \\
& -S_{-1}^t & V_0 & -S_0 & \\
& & -S_0^t & V_1 & \ddots \\
& & & \ddots & \ddots 
\end{array}\right)
\end{equation}
on $\ell^2(\Z;\C^\ell)$, where 
\begin{equation} \label{eq:defVS}
V_n(\omega):=f(T^n\omega) \quad \text{and} \quad S_n(\omega):=g(T^n\omega).
\end{equation} 
In this way, we have
\begin{equation}\label{ergprop}
{M}(T\omega)=U{M}(\omega)U^*,
\end{equation}
where
\begin{equation} 
(U\varphi)(j)=\varphi(j+1),\quad\varphi\in\ell^2(\Z;\C^\ell)
\end{equation}
is the left-shift operator. The family $\set{{M}(\omega)}_{\omega\in\Omega}$ is what we call an {\it ergodic block Jacobi matrix}. Note that by construction and (\ref{boundedassumption2}), ${M}$ is almost surely bounded and self-adjoint on $\ell^2(\Z;\C^\ell)$.

Define the finite volume operator
\begin{equation}\label{Hn1n2}
{M}_{[n_1,n_2]}:=\left( \begin{array}{cccc} V_{n_1} & -S_{n_1} & & \\ -S_{n_1}^t & V_{n_1+1} & \ddots & \\ & \ddots & \ddots & -S_{n_2-1} \\ & & -S_{n_2-1}^t & V_{n_2} \end{array} \right),
\end{equation}
which is the restriction of ${M}$ to $\ell^2([n_1,n_2];\C^\ell)$. For shorthand, we write ${M}_n:={M}_{[1,n]}$.

Define the projection $P_0:\ell^2(\Z;\C^\ell)\to\C^\ell$ by $P_0u=u(0)$. Then on Borel sets $A\subset \R$ the {\it density of states measure}
\begin{equation} \label{eq:DOS}
dN(A)= \E(\tr(P_0 \chi_A(M)P_0^*)) = \lim_{n\to\infty}\frac{1}{\ell n}\tr(\chi_A({M}_n)) \qquad\text{almost surely}
\end{equation}
exists by the standard arguments (e.g.\ Chapter 5 of \cite{Kir2007}), which hold also in this general ergodic setting. In fact, ignoring the limit, the quantity on the right-hand side defines an integrated density of states measure for ${M}_n$ which converges weakly to $dN$. We define the integrated density of states (IDS) to be the distribution function $E\mapsto N(E)$ of the measure $dN$. Its set of growth points is the almost sure spectrum $\Sigma$ of $M$.

We introduce the (modified) $2\ell \times 2\ell$ transfer matrices
\begin{equation}\label{tmprop}
A_k^E = \left(\begin{array}{cc}
0 & S_{k-1}^{-1} \\
-S_{k-1}^t & (V_k-E)S_{k-1}^{-1} \\
\end{array}\right), \quad E\in \C, \quad k=1,...,n
\end{equation}
and the $k$-step transfer matrices $T_k^E:=A_k^E\cdots A_1^E$, $k=1,...,n$. 
One can check that the properties
\begin{equation} 
{u(k)\choose S_ku(k+1)} = A_k^E{u(k-1)\choose S_{k-1}u(k)}, \quad\quad k=1,...,n
\end{equation}
are equivalent to $u: [0,n+1] \to \C^{\ell}$ solving the difference equation
\begin{eqnarray}
-S_{k-1}^tu(k-1)+V_ku(k)-S_ku(k+1) &=& Eu(k),\quad\quad k=1,...,n.\quad\quad \label{diffeq}
\end{eqnarray}

As our transfer matrices $A_k$ are symplectic, i.e.\
\begin{equation} 
A_k^tJA_k=J\quad\text{where}\quad J=\left(\begin{array}{cc}0 & I \\ -I & 0\end{array}\right),
\end{equation}
there exist $2\ell$ Lyapunov exponents, defined inductively by
\begin{equation} \label{eq:defLyap}
\gamma_1(E)+\cdots+\gamma_p(E)=\lim_{n\to\infty}\frac1n \log\|\wedge^p T_n^E\| \quad\text{a.s.,}\quad p=1,...,2\ell;
\end{equation}
see for example \cite{CL1990} for the definition of the exterior powers $\wedge^p T_n^E$. The Lyapunov exponents come in symmetric pairs about $0$,
\begin{equation} 
\gamma_1(E)\ge\cdots\ge\gamma_\ell(E)\ge 0 \ge \gamma_{\ell+1}(E)=-\gamma_\ell(E)\ge\cdots\ge \gamma_{2\ell}(E) = -\gamma_1(E).
\end{equation}
For existence in the ergodic setting, we refer to Theorem IV.2.6 in \cite{CL1990}, which holds also for complex energy.

The Thouless formula relates the IDS to the sum of the first $\ell$ Lyapunov exponents or, to stay with the convention in \cite{CS1983}, the Lyapunov index
\begin{equation} 
\gamma(E):=\frac1\ell[\gamma_1(E)+\cdots+\gamma_\ell(E)].
\end{equation}

\begin{thm}[Thouless Formula]\label{ThoulessThm}
Let $N(E)$ and $\gamma(E)$ be the IDS and Lyapunov index for an ergodic block Jacobi matrix as given by (\ref{boundedassumption2}) to (\ref{eq:defVS}). Then, for all $E\in \C$,
\begin{equation}\label{Thouless}
\gamma(E)=-\frac1\ell\E(\log|\det g|)+\int_\R \log\abs{E-E'}dN(E').
\end{equation}
\end{thm}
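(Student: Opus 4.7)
The plan is to follow the Craig--Simon strategy from \cite{CS1983}, adapting it to accommodate the nontrivial off-diagonal blocks $S_k=g(T^k\omega)$. The central finite-volume ingredient is an algebraic identity relating $\det(E-M_n)$ to the transfer matrix $T_n^E$. Writing $T_n^E$ as a $2\times 2$ array of $\ell\times\ell$ blocks with lower-right block $\delta_n(E)$, Dirichlet eigenvalues of $M_n$ correspond exactly to degeneracies of $\delta_n(E)$; running the recursion on the fundamental Dirichlet solution and telescoping the contributions of the $S_k^{-1}$ factors appearing in (\ref{tmprop}) yields
\begin{equation*}
|\det(E-M_n)| = \Bigl(\prod_{k=0}^{n-1}|\det S_k|\Bigr)\,|\det\delta_n(E)|,
\end{equation*}
which I would verify first for $n=1,2$ and then propagate by induction on $n$.

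Taking $\frac{1}{\ell n}\log(\cdot)$ on both sides splits the claim into three separate limits. Birkhoff's ergodic theorem gives $\frac{1}{\ell n}\sum_{k=0}^{n-1}\log|\det S_k|\to\frac{1}{\ell}\E(\log|\det g|)$ a.s., with integrability ensured by (\ref{boundedassumption2}). The multiplicative ergodic theorem underlying (\ref{eq:defLyap}), combined with the symplectic structure of the $A_k^E$, gives $\frac{1}{n}\log\|\wedge^\ell T_n^E\|\to \ell\gamma(E)$ a.s.; sandwiching $|\det\delta_n(E)|$ between $\|\wedge^\ell T_n^E\|$ and a lower bound coming from the uniform invertibility of the remaining blocks, one deduces $\frac{1}{\ell n}\log|\det\delta_n(E)|\to\gamma(E)$ at each fixed $E$. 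In view of the determinant identity this reduces (\ref{Thouless}) to showing that $\frac{1}{\ell n}\log|\det(E-M_n)|=\frac{1}{\ell n}\sum_j\log|E-E_j^{(n)}|$ converges to $\int_\R \log|E-E'|\,dN(E')$.

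This last convergence is the main obstacle: since $\log|E-\cdot|$ is unbounded, the weak convergence of empirical eigenvalue measures in (\ref{eq:DOS}) is not directly sufficient. Here I would invoke the subharmonic-functions argument of Craig and Simon. For each $\omega$ and $n$, the map $E\mapsto\frac{1}{\ell n}\log|\det(E-M_n)|$ is subharmonic on $\C$ with Riesz measure equal to the normalized eigenvalue counting measure of $M_n$, and the whole family is locally uniformly bounded above by a deterministic constant derived from the operator norm bound $\|M_n\|\le C$ supplied by (\ref{boundedassumption2}). Since the two previous limits already provide a.s.\ pointwise convergence at each $E$, standard $L^1_{\mathrm{loc}}$ compactness for subharmonic functions (as used for the strip in \cite{CS1983} and \cite{KoSi1988}) forces a.e.\ convergence of $\frac{1}{\ell n}\log|\det(E-M_n)|$ to a subharmonic function $F(E)$ whose Riesz measure coincides with the weak limit $dN$, so $F(E)=\int\log|E-E'|\,dN(E')$. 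This establishes (\ref{Thouless}) for Lebesgue-a.e.\ $E\in\C$.

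To upgrade from a.e.\ $E$ to every $E\in\C$, observe that both sides of (\ref{Thouless}) are subharmonic on $\C$ with the same Riesz measure $dN$ and the same leading logarithmic growth $\log|E|+O(1)$ at infinity; their difference is therefore a bounded harmonic function, hence a constant pinned down by the $|E|\to\infty$ asymptotics (which encode the $-\frac{1}{\ell}\E(\log|\det g|)$ term). The two technical hurdles are tracking the sign conventions in the determinant identity and making the sandwich between $|\det\delta_n(E)|$ and $\|\wedge^\ell T_n^E\|$ quantitative when $\delta_n(E)$ becomes nearly singular; the substantive analytic content sits entirely in the subharmonic-limit argument.
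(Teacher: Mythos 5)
Your determinant identity and the treatment of the $\prod\det S_k$ factor via Birkhoff, as well as the subharmonicity step at the end, all match the paper's strategy (Steps 1, 2, 8 in its proof). But there is a genuine gap at the heart of the proposal: the claim that one can ``sandwich $|\det\delta_n(E)|$ between $\|\wedge^\ell T_n^E\|$ and a lower bound coming from the uniform invertibility of the remaining blocks.'' The direction $|\det\delta_n(E)| \le \|\wedge^\ell T_n^E\|$ is trivial (a matrix element is bounded by the norm), and this gives the easy inequality $\gamma(E) \ge -\tfrac1\ell\E(\log|\det g|) + \int\log|E-E'|\,dN(E')$. The reverse bound, however, does not follow from invertibility of the $S_k$: the single scalar $\det\delta_n(E)$ is the $(e_{\ell+1}\wedge\cdots\wedge e_{2\ell})$-diagonal matrix element of $\wedge^\ell T_n^E$, and there is no reason it should capture the dominant exponential growth of $\|\wedge^\ell T_n^E\|$ --- the growth may be concentrated in other matrix entries of the exterior power, and $\delta_n(E)$ can even vanish. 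You flag this as a ``technical hurdle,'' but it is in fact the substantive content of the reverse inequality, and uniform bounds on $\|S_k^{\pm 1}\|$ cannot repair it.

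The paper's Steps 3--7 exist precisely to fill this hole. The idea (following Craig--Simon) is to control $\|\wedge^\ell T_n^E\|$ from above not by the one matrix element $\det\delta_n(E)$, but by a finite family of matrix elements obtained by appending extra transfer-matrix factors $A_{0,j}^E$, $A_{n+1,k}^E$ at both ends (Step 5); one first shows (Steps 3--4) that the corresponding boundary vectors form spanning sets of $\wedge^\ell\C^{2\ell}$ uniformly in $E$, so that by the norm-equivalence fact in Step 6, $\|\wedge^\ell T_n^E\|$ is comparable to the maximum of these matrix elements. Each such matrix element is, via Step 5's analogue of the Step 1 identity, the characteristic polynomial of an \emph{extended} block Jacobi matrix $M_{n,j,k}$, to which the weak convergence of eigenvalue counting measures still applies. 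Without this extended-operator construction (or some substitute for it), the reverse inequality, and hence the Thouless formula, does not follow. Your subharmonicity argument only reproduces the paper's Step 8, which extends equality from $\C\setminus\R$ to all of $\C$; it cannot by itself supply the missing upper bound on $\|\wedge^\ell T_n^E\|$, since the a.s.\ pointwise convergence you cite to anchor the $L^1_{\rm loc}$ compactness is exactly what the sandwich was supposed to provide.
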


Recall from (\ref{eq:defVS}) that $S_n(\omega) = g(T^n \omega)$ and thus, due to ergodicity, $\E(\log |\det S_n|) = \E(\log |\det g|)$ for all $n$. This shows how (\ref{Thouless}) generalizes the Thouless formula for the Anderson model on a strip, where $g=I$ and thus the first term on the right-hand side of (\ref{Thouless}) vanishes, reducing the Thouless formula to its familiar form.

That non-standard hopping, i.e.\ $g\not= I$, leads to a modification of the Thouless formula is well known for the case $\ell=1$, i.e.\ standard ergodic Jacobi matrices; see remarks on page 376 of \cite{CL1990} or page 274 of \cite{PF1992}. But for $\ell>1$ our result seems to be new, see, however, a discussion of the Thouless formula in the context of general ergodic block Jacobi matrices in Section~4 of \cite{SchuBa1}.

\begin{proof}[Proof of Theorem \ref{ThoulessThm}]
We will outline the major steps of the proof, which extends the arguments of \cite{CS1983} to the more general case considered here. For further details, we refer to Chapter~6 of \cite{Chap2013}.

In the following, for technical convenience, whenever working with fixed $n$, we set $S_0 = S_n = I$. This does not affect the limit defining $\gamma(E)$ as only one factor in the product of transfer matrices is changed, and this factor satisfies uniform norm bounds due to assumption (\ref{boundedassumption2}). It also does not affect the definition (\ref{eq:DOS}) of the density of states measure as the matrix $M_n$ only depends on $S_1, \ldots, S_{n-1}$.

\vspace{.3cm}

\underline{Step 1}: We claim that
\begin{equation}\label{thouless1}
\ip{e_{\ell+1}\wedge\cdots\wedge e_{2\ell}, \wedge^\ell T_n^E(e_{\ell+1}\wedge\cdots\wedge e_{2\ell})}=\frac{(-1)^{\ell n}}{\prod_{j=1}^{n-1}\det S_j} \prod_{m=1}^{\ell n}(E-E_m),
\end{equation}
where $\set{E_m}_{m=1}^{\ell n}$ are the eigenvalues of $M_n$, each repeated according to multiplicity.

To see this, note that we can write
\begin{equation} 
T_n^E = \left( \begin{array}{cc} Q^E(n) & P^E(n) \\ Q^E(n+1) & P^E(n+1) \end{array} \right),
\end{equation}
where $Q^E$ and $P^E$ are the unique matrix-valued solutions of
\begin{equation} 
-S_{k-1}^t X(k-1) + V_k X(k) - S_k X(k+1) = EX(k), \quad k=1,\ldots, n,
\end{equation}
satisfying $Q^E(0)=I$, $Q^E(1)=0$, $P^E(0)=0$, and $P^E(1)=I$.

By properties of exterior products, the matrix element in (\ref{thouless1}) is exactly $\det P^E(n+1)$, and Step 1 is complete after noticing that
\begin{equation} 
\det({M}_n-E)=\left(\prod_{j=1}^{n-1}\det S_j\right)\det P^E(n+1),
\end{equation}
which may be proven by induction by performing appropriate elementary column transformations on ${M}_n-E$.

\vspace{.3cm}

\underline{Step 2}: For $E\in\C\setminus\R$, we have the first necessary inequality:
\begin{equation}\label{thouless2}
\gamma(E)\ge -\frac1\ell\E(\log\abs{\det g})+\int_\R \log\abs{E-E'}dN(E').
\end{equation}
This inequality follows immediately from Step 1 by estimating $\|\wedge^\ell T_n^E\|$ below by its matrix element given in (\ref{thouless1}), and using Birkhoff's ergodic theorem and that $dN_n \xrightarrow{w} dN$, where $dN_n=\frac{1}{\ell n}\sum_{m=1}^{\ell n}\delta_{E_m}$ is the density of states measure of ${M}_n$.

\vspace{.3cm}

\underline{Step 3}: We have
\begin{itemize}
\item[(a)] The set
\begin{equation} \label{totalset'}
\left\{ \left( \begin{array}{c} e_1 \\ M e_1 \end{array} \right) \wedge \cdots \wedge \left( \begin{array}{c} e_{\ell} \\ M e_{\ell} \end{array} \right): \; M\in\R^{\ell\times\ell}\right\}
\end{equation}
is total in $\wedge^{\ell} \C^{2\ell}$.

\item[(b)] The set
\begin{equation} \label{totalset'2}
\left\{ \left( \begin{array}{c} -e_1 \\ M^t e_1 \end{array} \right) \wedge \cdots \wedge \left( \begin{array}{c} -e_{\ell} \\ M^t e_{\ell} \end{array} \right): \; M\in\R^{\ell\times\ell}\right\}
\end{equation}
is total in $\wedge^{\ell} \C^{2\ell}$.
\end{itemize}

This is proven by induction, which, while tedious, uses little more than multilinearity of exterior products.

\vspace{.3cm}

\indent\underline{Step 4}: Let $d_\ell=\dim(\wedge^\ell\C^{2\ell})={2\ell\choose\ell}$. Then
\begin{itemize}
\item[(a)] There exist matrices $M_{-,j}\in\R^{\ell\times\ell}$, $j=1,...,2d_\ell$, such that
\begin{equation}\label{Step4a}
\left\{ \left( \begin{array}{c} e_1 \\ (M_{-,j}-E) e_1 \end{array} \right) \wedge \cdots \wedge \left( \begin{array}{c} e_{\ell} \\ (M_{-,j}-E) e_{\ell} \end{array} \right): j = 1,...,2d_\ell \right\}
\end{equation}
is total in $\wedge^{\ell} \C^{2\ell}$ for all $E\in \C$. 

\item[(b)] There exist matrices $M_{+,k}\in\R^{\ell\times\ell}$, $k=1,...,2d_\ell$, such that
\begin{equation}\label{Step4b}
\left\{ \left( \begin{array}{c} -e_1 \\ (M_{+,k}^t-\conj{E})e_1 \end{array} \right) \wedge \cdots \wedge \left( \begin{array}{c} -e_{\ell} \\ (M_{+,k}^t-\conj{E})e_{\ell} \end{array} \right): k = 1,...,2d_\ell \right\}
\end{equation}
is total in $\wedge^{\ell} \C^{2\ell}$ for all $E\in \C$.
\end{itemize}

Step 3 gives us matrices $M_1,...,M_{d_\ell}$ that yield (\ref{totalset'}) and (\ref{totalset'2}). The price we pay now in Step 4 for wanting such spanning properties for {\it all} $E\in\C$ is that we allow twice as many matrices as before. Notice that if we take $M_{-,j}=M_j$, $j=1,...,d_\ell$, then there are finitely many $E$ such that the vectors in (\ref{Step4a}) do not span $\wedge^\ell\C^{2\ell}$, for if we form a matrix whose columns are the (coordinate representations of the) vectors in (\ref{Step4a}), $j=1,...,d_\ell$, the determinant of this matrix is a polynomial in $E$ with finitely many roots. We can then shift all of these matrices by $\lambda_0 I$, where $\lambda_0\in\R$ is large enough so that we avoid these roots, and we let $M_{-,j}$, $j=d_\ell+1,...,2d_\ell$, be these shifts.

\vspace{.3cm}

\indent\underline{Step 5}: For $j,k=1,...,2d_\ell$, define the extended block Jacobi matrices
\begin{equation} \label{eq:HLext}
{M}_{n,j,k} = \left( \begin{array}{cccccc} M_{-,j} & -I & & & & \\ -I & V_1 & -S_1 & & & \\ &  -S_1^t & \ddots & \ddots & & \\ & & \ddots & \ddots & -S_{n-1} & \\ & & & -S_{n-1}^t & V_n & -I \\ & & & & -I & M_{+,k} \end{array} \right).
\end{equation}
This extended operator ${M}_{n,j,k}$ shares the same transfer matrices $A_k$, $k=1,...,n$ as ${M}_n$ (given that we argued earlier that we may set $S_0=S_n=I$). However, it possesses an additional transfer matrix at each end. These are
\begin{equation}
A_{0,j}^E=\left(\begin{array}{cc} 0 & I \\ -I & M_{-,j}-E\end{array}\right)\quad\text{and}\quad A_{n+1,k}^E=\left(\begin{array}{cc} 0 & I \\ -I & M_{+,k}-E\end{array}\right),
\end{equation}
which satisfy
\begin{equation}
{u(0)\choose u(1)} = A_{0,j}^E{u(-1)\choose u(0)} \quad\text{and}\quad {u(n+1)\choose u(n+2)} = A_{n+1,k}^E{u(n)\choose u(n+1)}
\end{equation}

Similar to Step 1, we have
\begin{equation}\label{extendedroots}
\langle e_{\ell+1}\wedge\cdots\wedge e_{2\ell},\wedge^\ell (A_{n+1,k}^E T_n^E A_{0,j}^E)(e_{\ell+1}\wedge\cdots\wedge e_{2\ell})\rangle = \frac{(-1)^{\ell n}}{\prod_{j=1}^{n-1}\det S_j}\prod_{m=1}^{\ell(n+2)}(E-\tilde E_m),
\end{equation}
where $\{\tilde E_m\}_{m=1}^{\ell(n+2)}\subset\C$ are the eigenvalues of ${M}_{n,j,k}$, which are no longer necessarily real if $M_{-,j}$ or $M_{+,k}$ is not symmetric (meaning also that we now have to count according to {\it algebraic} multiplicity).

\vspace{.3cm}

\underline{Step 6}: If $V_1$ and $V_2$ are finite-dimensional inner product spaces with finite spanning sets $S_1$ and $S_2$, respectively, then there is a $c>0$ such that for all linear transformations $A:V_1\to V_2$,
\begin{equation}
c\norm{A} \le \sup_{\varphi\in S_2,\psi\in S_1} \abs{\ip{\varphi,A\psi}}.
\end{equation}
This is easy to see by equivalence of norms, since the right-hand side can be seen to be a norm.

\vspace{.3cm}

\underline{Step 7}: For $E\in\C\setminus\R$, we have the reverse inequality:
\begin{equation}\label{thouless4}
\gamma(E)\le -\frac1\ell\E(\log\abs{\det g})+\int_\R \log\abs{E-E'}dN(E').
\end{equation}

This is the more difficult inequality. We estimate $\|\wedge^\ell T_n^E\|$ via Step 6, using the spanning sets given in Step 4. One can then see how (\ref{extendedroots}) arises, which allows us to finish the proof using Step 5 similar to how Step 2 was done using Step 1.

\vspace{.3cm}

\underline{Step 8}: The Thouless formula holds for all $E\in\R$ as well. By the arguments in \cite{CS1983S}, the left- and right-hand sides of (\ref{Thouless}) are subharmonic functions. Since we have shown (\ref{Thouless}) for $E\in\C\setminus\R$, which has full measure in $\C$, equality extends to all of $\C$.
\end{proof}

\section{Dynamical Localization} \label{sec:DL}

While the Thouless formula in Section~\ref{sec:Thouless} only required ergodicity of the block Jacobi matrix (\ref{BJM}), in order to get localization properties we now consider the case of i.i.d.\ random entries. More precisely, let $\set{V_n}$ be i.i.d.\ with common distribution $\mu^1$ compactly supported in the real symmetric $\ell \times \ell$-matrices, and $\set{S_n}$ i.i.d.\ with common distribution $\mu^2$ compactly supported in the real invertible $\ell \times \ell$-matrices. We also assume that the $\set{V_n}$ and $\set{S_n}$ are independent from each other. The boundedness assumption (\ref{boundedassumption2}) then becomes
\begin{equation}\label{boundedassumption}
\norm{V_0}+\norm{S_0}+\|S_0 ^{-1}\|\le D < \infty \quad\text{almost surely.}
\end{equation}

Notice that ${M}_{\nu,\gamma}$ from (\ref{blockJacobi}) is covered by this model when we choose $\ell=2$, $V_n=\nu_n \sigma^z$, and $S_n=S(\gamma)$.

For this class of random block Jacobi matrices, we prove dynamical localization via the bootstrap multiscale analysis (MSA) of Germinet and Klein \cite{GeKle2001}, under suitable contractivity and irreducibility assumptions on the F\"urstenberg group. These assumptions may be checked, for example, by showing Zariski-denseness, a concept which we will discuss in more detail in Section~\ref{sec:app}. To apply the bootstrap MSA, it is sufficient (see, e.g., Klein's survey \cite{Klein2008}) to show an appropriate Wegner estimate and initial length scale estimate. To this end, we adapt the approach of Klein, Lacroix, and Speis \cite{KLS1990}, which proves such estimates for Anderson models on strips. Two important inputs into this argument are the Thouless formula (to prove regularity of the IDS) and a representation formula for the Green function (allowing to turn positivity of Lyapunov exponents into exponential Green function decay).

We will now discuss this in some more detail but heavily refer to earlier works.

Under the assumptions of this section the modified transfer matrices $A_n^E$ in (\ref{tmprop}) are i.i.d.\ with common distribution $\mu_E$ compactly supported in $\Sp_{\ell}(\R)$, the $2\ell \times 2\ell$-symplectic matrices. We define the F\"urstenberg group
\begin{equation}\label{Furstgroup}
G_{\mu_E}:=\overline{\ip{\supp \,\mu_E}}
\end{equation}
to be the smallest closed subgroup of $\Sp_{\ell}(\R)$ containing $\supp \mu_E$. 

For $\ell=1$, F\"urstenberg's theorem, e.g.\ \cite{CL1990}, says that non-compactness and strong irreducibility of $G_{\mu_E}$ imply positivity of the Lyapunov exponent at $E$. For a higher-order analogue of F\"urstenberg's theorem one has to require that the F\"urstenberg group is $p$-contractive and $L_p$-strongly irreducible for $p=1,\ldots,\ell$ (see \cite{BL1985} for definitions). Thus we will assume that
\begin{equation} \label{DefCI}
\begin{array}{c} G_{\mu_E} \ \text{is $p$-contracting and $L_p$-strongly irreducible for every} \\ \text{$p\in\set{1,...,\ell}$ and $E\in I$, where $I\subset \R$ is an open interval.} \end{array}
\end{equation}

By Proposition IV.3.4 of \cite{BL1985}, (\ref{DefCI}) implies $\gamma_1(E)>\cdots>\gamma_\ell(E)>0$ for all $E\in I$. But much more is true: 

\begin{thm}[Dynamical Localization]\label{BJMdynloc}
If (\ref{DefCI}) holds, then for every compact interval $J\subset I$ and every $\zeta\in(0,1)$, there exist constants $C<\infty$ and $\eta>0$ such that for every $L\in\N$ and $j,k \in \Lambda_L$,
\begin{equation}\label{dynloc}
\E\left(\sup_{t\in\R}\|P_j \chi_J({M}_{\Lambda_L})e^{-it{M}_{\Lambda_L}}P_k^*\|\right) \le Ce^{-\eta|j-k|^\zeta}.
\end{equation}
\end{thm}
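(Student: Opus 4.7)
The proof will follow the template of Klein--Lacroix--Speis \cite{KLS1990}, adapted from the Anderson model on a strip to the block Jacobi setting. The plan is to establish, for the random block Jacobi matrix on $\Lambda_L$ and every compact $J\subset I$, a Wegner estimate and an initial length scale estimate, and then to invoke the bootstrap multiscale analysis of Germinet--Klein \cite{GeKle2001} (as surveyed in \cite{Klein2008}) to conclude \eqref{dynloc} for every $\zeta\in(0,1)$.

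The first step is to convert the qualitative hypothesis \eqref{DefCI} into two quantitative inputs via the Bougerol--Lacroix machinery. Proposition IV.3.4 of \cite{BL1985} already gives $\gamma_1(E)>\cdots>\gamma_\ell(E)>0$ throughout $I$, as noted in the text. A Le Page--type theorem for products of random symplectic matrices (Chapters V--VI of \cite{BL1985}) then yields local H\"older continuity on $I$ of each Lyapunov exponent $\gamma_p(E)$, together with large deviation bounds of the form
\[
\PP\!\left(\left|\tfrac{1}{n}\log\|\wedge^p T_n^E\|-(\gamma_1+\cdots+\gamma_p)(E)\right|>\delta\right)\le Ce^{-\tau n},
\]
uniform for $E\in J$. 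Feeding H\"older continuity of the Lyapunov index $\gamma(E)=\ell^{-1}\sum_{p=1}^\ell\gamma_p(E)$ into the Thouless formula of Theorem~\ref{ThoulessThm}, I would transfer regularity to the IDS $N(E)$ on $J$ by standard subharmonic/potential-theoretic arguments: H\"older regularity passes from the logarithmic potential of $dN$ back to $dN$ itself. A Markov-inequality argument using $\E[\#(\sigma(M_{\Lambda_L})\cap(E-\eps,E+\eps))]\le \ell|\Lambda_L|(N(E+\eps)-N(E-\eps))$ then yields the Wegner estimate $\PP(\dist(E,\sigma(M_{\Lambda_L}))\le\eps)\le C|\Lambda_L|\eps^\alpha$ on $J$, for some $\alpha>0$.

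The initial length scale estimate is obtained from the positivity of the smallest positive Lyapunov exponent $\gamma_\ell(E)$ combined with the large deviation bound above. Using a matrix-valued Green function representation analogous to that of \cite{KLS1990}, which expresses finite-volume Green function entries $(M_{\Lambda_L}-E)^{-1}(j,k)$ through the transfer matrix products $T_k^E$ from \eqref{tmprop}, one shows that with probability at least $1-L^{-p}$ (any preset $p>0$, for $L\ge L_0(p)$) one has
\[
\|P_j(M_{\Lambda_L}-E)^{-1}P_k^*\|\le \exp\!\bigl(-\tfrac{1}{2}\gamma_\ell(E)|j-k|\bigr)
\]
for $j,k$ sufficiently far apart in $\Lambda_L$ and $E\in J$. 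These two estimates are precisely the inputs required by the bootstrap MSA of \cite{GeKle2001}, which then produces the subexponentially decaying eigenfunction-correlator bound \eqref{dynloc}, valid for every $\zeta\in(0,1)$.

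The principal technical obstacle I expect is the Green function representation in the presence of the non-trivial off-diagonal blocks $S_n$: in the strip case of \cite{KLS1990} the off-diagonals are identities, whereas here one must work with the modified transfer matrices \eqref{tmprop} involving $S_{k-1}^{-1}$, and carefully verify that the resulting decay estimates remain uniform by exploiting the almost-sure bound $\|S_0^{-1}\|\le D$ from \eqref{boundedassumption}. A secondary point is to confirm that the Le Page theorems on H\"older regularity and large deviations are indeed available under the $p$-contraction plus $L_p$-strong irreducibility hypothesis \eqref{DefCI}, rather than the strong-irreducibility-plus-proximality formulation sometimes used; this is however covered by the framework of \cite{BL1985}. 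Once these two technical issues are handled, the remainder of the argument is a direct transcription of the KLS--Germinet--Klein scheme to our block setting.
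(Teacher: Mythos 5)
Your high-level plan coincides with the paper's: use \eqref{DefCI} to get local H\"older continuity of Lyapunov exponents and large deviation bounds, push regularity through the Thouless formula into H\"older continuity of the IDS, establish a Wegner estimate and an initial length scale estimate, and conclude \eqref{dynloc} by the Germinet--Klein bootstrap multiscale analysis. However, your proposed derivation of the Wegner estimate has a genuine gap, and this is precisely the step where singular single-site distributions make the argument delicate.

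The Markov-inequality route rests on the inequality
\begin{equation*}
\E\bigl[\#\bigl(\sigma(M_{\Lambda_L})\cap(E-\eps,E+\eps)\bigr)\bigr]\le \ell\,|\Lambda_L|\,\bigl(N(E+\eps)-N(E-\eps)\bigr),
\end{equation*}
with the infinite-volume IDS $N$ on the right. No such pointwise domination holds: the expected finite-volume eigenvalue count is only asymptotically comparable to $\ell L\,dN$, and what interlacing or finite-rank boundary-perturbation arguments actually give is an additive error of order $O(\ell)$, uniform in $\eps$. For the MSA one needs $\eps\sim e^{-\sigma L^{\beta}}$, where the main term $\ell L\bigl(N(E+\eps)-N(E-\eps)\bigr)\lesssim L\eps^{\alpha}$ is exponentially small and is swamped by that $O(\ell)$ remainder; the resulting bound on $\PP\bigl(\dist(E,\sigma(M_{\Lambda_L}))\le\eps\bigr)$ is then vacuous. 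Indeed, for a singular distribution $\rho$ a Wegner bound of the form $CL\eps^{\alpha}$ uniform in $\eps$ cannot be true, since certain fixed energies can be finite-volume eigenvalues with positive probability. What the paper proves instead (Theorem~\ref{Wegner}) is $\PP\bigl(\dist(E,\sigma(M_{\Lambda_L}))\le e^{-\sigma L^{\beta}}\bigr)\le e^{-\tau L^{\beta}}$ at a fixed sub-exponential scale, and its proof requires more than IDS regularity: it is built on the block Green-function formula \eqref{eq:Green}, in particular $G_L(1,L;z)=(S_LU^{z}(L+1))^{-1}$, which ties proximity of $E$ to $\sigma(M_{\Lambda_L})$ to the size of the matrix solution $U^{E}(L+1)$ and hence, via positivity of $\gamma_\ell$ and large deviation bounds for the transfer matrices, to a sub-exponentially small probability. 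Without this transfer-matrix/Green-function input the Wegner estimate does not follow from a counting argument alone. The rest of your outline --- Le Page machinery under \eqref{DefCI}, the Thouless step, the ILSE from $\gamma_\ell>0$, and the attention to $S_k^{-1}$ and the bound $\|S_0^{-1}\|\le D$ --- tracks the paper's route; note, however, that the paper's ILSE also invokes the Wegner estimate, both to ensure $E_0\notin\sigma(M_{\Lambda_L})$ with high probability and within the geometric resolvent argument.
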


Here, $P_j:\ell^2(\Z;\C^\ell)\to\C^\ell$ is the projection $P_j u=u(j)$, and $\Lambda_L:=[-L,L]$.

In Section~\ref{sec:app} below we will discuss applications of Theorem~\ref{BJMdynloc} by verifying in certain examples that assumption (\ref{DefCI}) holds. In particular, this will complete the proof of Theorem~\ref{thm:main} on dynamical localization for block operators associated with the XY chain in random exterior field.

In the remainder of this section, we sketch what goes into the proof of Theorem \ref{BJMdynloc}. We will not present every needed technical result for two reasons: A complete proof may be found in Chapter 7 of \cite{Chap2013}, and the approach is essentially the same as in \cite{KLS1990}, with some streamlining and slight modifications due to the non-standard hopping.

First, one uses (\ref{DefCI}) to conclude local H\"older continuity of the Lyapunov exponents on $I$ (see page 279 of \cite{CL1990}): For every $p\in\set{1,...,\ell}$ and every compact interval $J\subset I$, there exist constants $C<\infty$ and $\alpha>0$ such that
\begin{equation}\label{gammaHolder}
\abs{\gamma_p(E)-\gamma_p(E')}\le C\abs{E-E'}^\alpha \quad \mbox{for all $E,E' \in J$}.
\end{equation}
The Thouless formula, Theorem \ref{ThoulessThm}, along with properties of the Hilbert transform, is then used to transfer this H\"older continuity to the IDS:
For every compact interval $J\subset I$ there exist constants $C<\infty$ and $\alpha>0$ such that
\begin{equation}\label{NHolder}
\abs{N(E)-N(E')}\le C\abs{E-E'}^\alpha \quad \mbox{for all $E,E' \in J$}.
\end{equation}
The proof of (\ref{NHolder}) is essentially the same as that of Theorem A.1 in \cite{CKM1987}. This H\"older continuity of the IDS is one of the key ingredients in the proof of a Wegner estimate, Theorem \ref{Wegner} below.

To prove a Wegner estimate, which can be thought of as a statement about the size of a resolvent, one needs to be able to express Green functions in terms of solutions of the associated difference equation. For standard Jacobi matrices this is a well-known classical formula. A block form of this formula which holds for discrete Schr\"odinger operators on strips can be found, for example, in Proposition~III.5.6 of \cite{CL1990}. What we need here can be found in Section~2 of \cite{SchuBa2} and is a version of this formula which holds for general block Jacobi matrices, including non-standard hopping terms. To state it, we first define the (modified) Wronskian of two matrix-valued functions $U,V:[0,L+1]\to\C^{\ell\times\ell}$ as
\begin{equation}\label{Wronsk}
(W(U,V))(k) := (V(k))^t S_k U(k+1) - (S_k V(k+1))^t U(k), \quad k=0,...,L.
\end{equation}
This is shown to be constant (in $k$) if $U$ and $V$ are solutions of the finite difference equation
\begin{equation}\label{matrixFDE}
-S_{k-1}^tX(k-1)+V_kX(k)-S_kX(k+1)=zX(k),\quad k=0,...,L.
\end{equation}
If we let $U^z$ and $V^z$ be the unique solutions of (\ref{matrixFDE}) satisfying $U^z(0)=0$, $U^z(1)=I$, $V^z(L)=I$, and $V^z(L+1)=0$, then $W(U^z,V^z)$ is invertible if and only if $z\notin\sigma({M}_L)$. For such $z$ and for $j,k\in\set{1,...,L}$, we can also define the block Green function $G_L(j,k;z):=P_j({M}_L-z)^{-1}P_k^*$. For any $z\in\C\setminus\sigma({M}_L)$, we have the Green function formula
\begin{equation} \label{eq:Green}
G_L(j,k;z) = \left\{ \begin{array}{ll} U^z(j) W(U^z,V^z)^{-1} V^z(k)^t, & \mbox{if $j\le k$} \\ V^z(j) (W(U^z,V^z)^t)^{-1} U^z(k)^t, & \mbox{if $j\ge k$}. \end{array} \right.
\end{equation}
In \cite{SchuBa2} the above facts are discussed within a derivation of Weyl theory for general block Jacobi matrices. Also see Appendix~B of \cite{Chap2013} for a proof of (\ref{eq:Green}). 

Choosing $(j,k)=(1,L)$ in (\ref{eq:Green}) and $k=L$ in (\ref{Wronsk}) yields the relation
$G_L(1,L;z)=(S_LU^z(L+1))^{-1}$, which provides a crucial link between growth properties of solutions of (\ref{matrixFDE}), e.g.\ Lyapunov exponents, and decay of Green's function. In particular, this combined with H\"older continuity of the IDS (\ref{NHolder}) allows to adapt arguments of \cite{KLS1990} to prove the following Wegner estimate. Again, we refer to Chapter 7 of \cite{Chap2013} for a full proof.

\begin{thm}[Wegner Estimate]\label{Wegner}
Suppose (\ref{DefCI}) holds. For any $\beta\in(0,1)$, $\sigma>0$, and compact interval $J\subset I$, there exist $L_0=L_0(J,\beta,\sigma)$ and $\tau=\tau(J,\beta,\sigma)>0$ such that
\begin{equation} 
\PP(d(E,\sigma({M}_{\Lambda_L}))\le e^{-\sigma L^\beta}) \le e^{-\tau L^\beta}
\end{equation}
for all $E\in J$ and $L\ge L_0$.
\end{thm}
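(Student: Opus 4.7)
My plan follows the Klein--Lacroix--Speis strategy of \cite{KLS1990}, adapted to the block Jacobi setting with non-standard hopping $S_k\ne I$.

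The first step is a standard reduction: the event $\{d(E,\sigma(M_{\Lambda_L}))\le s\}$ is exactly the event that $M_{\Lambda_L}$ has at least one eigenvalue in $[E-s,E+s]$, so by Markov's inequality
\begin{equation*}
\PP\bigl(d(E,\sigma(M_{\Lambda_L}))\le s\bigr) \;\le\; \E\bigl[\tr\chi_{[E-s,E+s]}(M_{\Lambda_L})\bigr].
\end{equation*}
The heart of the proof is then to establish the uniform-in-$L$ H\"older bound
\begin{equation*}
\E\bigl[\tr\chi_J(M_{\Lambda_L})\bigr] \;\le\; CL\,|J|^{\alpha}
\end{equation*}
for all intervals $J$ of small enough length contained in a fixed neighborhood of the compact interval from the statement, where $\alpha>0$ is the H\"older exponent of the infinite-volume IDS $N$ furnished by (\ref{NHolder}). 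H\"older continuity of $N$ itself is already in hand: it follows from the Thouless formula (Theorem~\ref{ThoulessThm}) together with (\ref{gammaHolder}) via the Hilbert transform argument of \cite{CKM1987}. The transfer of this regularity to the finite-volume mean spectral measure $\mathcal{N}_L(J):=\E[\tr\chi_J(M_{\Lambda_L})]/|\Lambda_L|$ is where the Green function representation (\ref{eq:Green}) enters, specifically through the identity $G_L(1,L;z)=(S_LU^z(L+1))^{-1}$ combined with positivity of all Lyapunov exponents from (\ref{DefCI}); together with the uniform bounds (\ref{boundedassumption}) on $\|S_k^{\pm 1}\|$ these allow the Wronskian, transfer-matrix and rank-perturbation estimates of \cite{KLS1990} to be reproduced in the block setting with only straightforward bookkeeping modifications.

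Combining the two displays with $s=e^{-\sigma L^\beta}$ yields
\begin{equation*}
\PP\bigl(d(E,\sigma(M_{\Lambda_L}))\le e^{-\sigma L^\beta}\bigr) \;\le\; CL\,e^{-\alpha\sigma L^\beta}.
\end{equation*}
Because $\beta<1$, for any $\tau\in(0,\alpha\sigma)$ the prefactor $CL$ is absorbed into $e^{(\alpha\sigma-\tau)L^\beta}$ once $L\ge L_0(J,\beta,\sigma,\tau)$, producing the claimed bound $e^{-\tau L^\beta}$.

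The main obstacle is the uniform H\"older trace estimate for $\mathcal{N}_L$. Standard Wegner arguments proceed by spectral averaging in a single random parameter, but this route fails here because the indefinite diagonal blocks $\nu_j\sigma^z$ (the same feature responsible for Proposition~\ref{specsymlemma}) render $M_n$ non-monotone in $\nu_j$; one is therefore forced through the Thouless/Lyapunov framework, and propagating the factors $S_k$, $S_k^{-1}$ consistently through the finite-volume-to-infinite-volume comparison is the point at which the present block setting departs substantively from \cite{KLS1990}.
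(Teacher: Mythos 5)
The Markov step is fine, but the uniform-in-$L$ trace bound you identify as the heart of the proof --- $\E[\tr\chi_J(M_{\Lambda_L})]\le CL|J|^\alpha$ for all sufficiently small intervals $J$ --- is false in the generality stated and cannot be deduced from H\"older continuity of the infinite-volume IDS. Hypothesis (\ref{eq:coeffcond2}) allows Bernoulli single-site distributions; handling these is precisely why the paper adopts the KLS route instead of spectral averaging. For Bernoulli $\rho$ the matrix $M_{\Lambda_L}$ takes only finitely many values, so its eigenvalue law is purely atomic with minimal atom mass $\min(p,1-p)^{|\Lambda_L|}$, and shrinking $J$ around any such atom in the relevant energy window gives $\E[\tr\chi_J(M_{\Lambda_L})]\ge\min(p,1-p)^{|\Lambda_L|}>0$ while $CL|J|^\alpha\to 0$. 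Even setting this aside, what H\"older regularity of the limiting $N$ together with fixed-rank boundary-perturbation arguments can be expected to yield is only $\E[\tr\chi_J(M_{\Lambda_L})]\le CL|J|^\alpha + O(1)$, and that additive $O(1)$ term is fatal in the Markov step.

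This is why the KLS-type argument the paper is adapting does not detour through a finite-volume trace bound: it aims directly at the probability estimate, combining the determinant/Thouless-formula identity $\det(M_n-E)=\bigl(\prod_j\det S_j\bigr)\det P^E(n+1)$ from Step~1 of the Thouless proof, a large-deviation bound for the transfer-matrix products furnished by (\ref{DefCI}), and the Green-function relation $G_L(1,L;z)=(S_LU^z(L+1))^{-1}$ (which converts Lyapunov positivity into decay of the boundary resolvent entry), with the H\"older continuity (\ref{NHolder}) feeding in through the Thouless/determinant side rather than through $\chi_J$. Asserting that (\ref{eq:Green}) and positivity of the Lyapunov exponents let you reproduce your trace bound ``with only straightforward bookkeeping'' papers over exactly the step that must change to accommodate singular $\rho$; at the scale $|J|\sim e^{-\sigma L^\beta}$ that trace bound is essentially a restatement of the Wegner estimate itself, so the derivation is circular. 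A secondary point: spectral averaging fails here primarily because $\rho$ may be singular --- it requires absolute continuity of the single-site law --- not because of the non-monotonicity of $M_n$ in $\nu_j$; the latter is an additional, distinct obstruction.
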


Using Theorem \ref{Wegner} and some geometric resolvent identities, one proves the following initial length scale estimate, similarly to how it is done in \cite{KLS1990}, but accounting for the non-standard hopping terms by absorbing the upper bound (\ref{boundedassumption}) into various constants.

\begin{thm}[Initial Length Scale Estimate]
Suppose (\ref{DefCI}) holds, and let $E_0\in I$. For every $\eps>0$ and $\beta\in(0,1)$ there exist $L_1=L_1(E_0,\eps,\beta)$ and $\kappa=\kappa(E_0,\eps,\beta)>0$ such that
\begin{equation} 
\PP\left(E_0\notin\sigma({M}_{\Lambda_L})\ \text{and}\ \|G_{\Lambda_L}(0,L;E_0)\| \le e^{-(\gamma_\ell(E_0)-\eps)L/16}\right) \ge 1-e^{-\kappa L^\beta}
\end{equation}
for all $L\ge L_1$.
\end{thm}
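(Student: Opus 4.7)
The plan is to follow the strategy of \cite{KLS1990} for Anderson models on a strip, adapted to accommodate the non-standard hopping blocks $S_k$ via the uniform bound (\ref{boundedassumption}). The three main ingredients are: quantitative large-deviation estimates for products of transfer matrices (available under (\ref{DefCI}) by \cite{BL1985}); the Green function formula (\ref{eq:Green}); and the Wegner estimate just proved in Theorem~\ref{Wegner}. The overall idea is to first produce exponential decay of a Green function on the half-box $[0,L]$ using the positivity of $\gamma_\ell(E_0)$, and then transfer that decay to the full $\Lambda_L$ via a geometric resolvent identity, with Theorem~\ref{Wegner} controlling the distance from $E_0$ to the various finite-volume spectra.

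The first step will be to invoke quantitative large-deviation control on the singular values of the transfer-matrix products, which follows from (\ref{DefCI}) by Propositions V.2.2 and V.4.4 of \cite{BL1985}: for every $p\in\{1,\ldots,\ell\}$ and every $\eps_1>0$ there exist $C_1,c_1>0$ with
\begin{equation*}
\PP\left(\left|\tfrac{1}{n}\log\|\wedge^p T_n^{E_0}\|-\sum_{j=1}^p\gamma_j(E_0)\right|>\eps_1\right)\le C_1 e^{-c_1 n}.
\end{equation*}
Combining the cases $p=\ell-1$ and $p=\ell$ extracts, on an event of probability $\ge 1-C_1 e^{-c_1 L}$, the lower bound $\sigma_\ell(T_L^{E_0})\ge e^{(\gamma_\ell(E_0)-\eps_1)L}$ for the smallest positive singular value of $T_L^{E_0}$. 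Specialising this to the matrix-valued solution $U^{E_0}$ of (\ref{matrixFDE}) on $[0,L]$ determined by $U^{E_0}(-1)=0$ and $U^{E_0}(0)=I$, I obtain $\|U^{E_0}(L+1)^{-1}\|\le e^{-(\gamma_\ell(E_0)-\eps_1)L}$. Applying Theorem~\ref{Wegner} to the half-box operator $M^+:=M_{[0,L]}$ with some $\sigma>0$ then guarantees, on a further event of probability $\ge 1-e^{-\tau L^\beta}$, that $E_0\notin\sigma(M^+)$; on the intersection, the Green function formula (\ref{eq:Green}) for $M^+$ at $(j,k)=(0,L)$ collapses (since $U^{E_0}(0)=I$ and $V^{E_0}(L)=I$) to $G^+(0,L;E_0)=(S_L U^{E_0}(L+1))^{-1}$, and hence $\|G^+(0,L;E_0)\|\le D\,e^{-(\gamma_\ell(E_0)-\eps_1)L}$ by (\ref{boundedassumption}).

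A geometric resolvent identity will then decouple $\Lambda_L$ across the bond $\{-1,0\}$ and express $G_{\Lambda_L}(0,L;E_0)$ as $G^+(0,L;E_0)$ plus a correction involving $G^+(0,-1;E_0)$, the rank-$2\ell$ coupling block (of norm $\le D$), and a residual factor of $(M_{\Lambda_L}-E_0)^{-1}$. A third invocation of Theorem~\ref{Wegner}, now for $M_{\Lambda_L}$ itself, bounds that residual by $e^{\sigma L^\beta}$, and since $\beta<1$ this stretched-exponential loss is absorbed by a fraction of the exponential decay. A symmetric treatment of the left half $[-L,-1]$, together with the polynomial losses from the Wronskian inversion and a union bound over the three Wegner events and the two large-deviation events, yields the claim with the stated rate $(\gamma_\ell(E_0)-\eps)L/16$ and some $\kappa>0$. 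The main obstacle is precisely the coordination of these losses: the stretched-exponential Wegner error $e^{\sigma L^\beta}$ must be kept strictly smaller than the exponential gain $e^{\gamma_\ell L}$ while simultaneously accommodating the Wronskian inversion, the non-standard hopping (absent in \cite{KLS1990}, handled here through (\ref{boundedassumption})), and the doubled application on both half-boxes — and it is this bookkeeping that forces the dilution factor $1/16$ in the final decay rate, rather than any single delicate inequality.
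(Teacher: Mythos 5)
Your high-level skeleton --- large deviation estimates for transfer-matrix products, the Green function formula (\ref{eq:Green}) evaluated at a boundary pair, the Wegner estimate, and geometric resolvent identities --- matches the ingredients the paper references for its \cite{KLS1990}-style argument (the full proof is deferred to Chapter~7 of \cite{Chap2013}). But there is a genuine gap at the heart of your argument, in the step where you pass from the singular-value estimate on the $2\ell\times 2\ell$ product $T_L^{E_0}$ to the bound on the $\ell\times\ell$ block $U^{E_0}(L+1)$.

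Specifically, you assert that $\sigma_\ell(T_L^{E_0})\ge e^{(\gamma_\ell(E_0)-\eps_1)L}$ together with $E_0\notin\sigma(M^+)$ yields $\|U^{E_0}(L+1)^{-1}\|\le e^{-(\gamma_\ell(E_0)-\eps_1)L}$. This does not follow: the smallest singular value of a sub-block of a symplectic matrix is \emph{not} controlled from below by the $\ell$-th singular value of the full matrix. Take, for instance, $T=\diag(\lambda I_\ell,\lambda^{-1}I_\ell)$ with $\lambda$ large; then $\sigma_\ell(T)=\lambda$ is huge, yet the lower-right block is $\lambda^{-1}I_\ell$, whose inverse has norm $\lambda$ --- large, not small. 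The point is that the expanding directions of $T_L^{E_0}$ may be nearly orthogonal to the image of the vertical Lagrangian $\{0\}\oplus\C^\ell$, in which case the relevant block is contracting despite $\sigma_\ell$ being large. The Wegner estimate only ensures invertibility of $U^{E_0}(L+1)$ (nonvanishing determinant); it gives no quantitative lower bound on its smallest singular value of the required exponential size.

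The actual mechanism that produces the exponential lower bound on $\sigma_{\min}(U^{E_0}(L+1))$ is what you have skipped, and it is the technically delicate part of the initial length scale estimate. One route (the one underlying \cite{KLS1990}) uses the identity $\det U^{E_0}(L+1)=\big(\prod\det S_j\big)^{-1}\prod_m(E_0-E_m)$ (Step~1 of the Thouless proof) together with the regularity of the finite-volume eigenvalue distribution (ultimately the Thouless formula / H\"older continuity of the IDS from (\ref{NHolder})) to show that $|\det U^{E_0}(L+1)|$ is, with high probability, at least $e^{(\gamma_1+\cdots+\gamma_\ell-\eps)L}$ up to subexponential corrections; one then bounds $\sigma_{\min}\ge |\det|/\|\wedge^{\ell-1}U^{E_0}(L+1)\|$ and controls $\|\wedge^{\ell-1}U^{E_0}(L+1)\|\le\|\wedge^{\ell-1}T_L^{E_0}\|$ from above via the $p=\ell-1$ large-deviation bound. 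An alternative route uses the sharper per-vector large-deviation statement for $\|\wedge^\ell T_n^{E_0}v_0\|$ applied to the isotropic vector $v_0=e_{\ell+1}\wedge\cdots\wedge e_{2\ell}$, which crucially invokes strong irreducibility of $G_{\mu_{E_0}}$ acting on $\wedge^\ell\R^{2\ell}$, not just the norm estimate you cite. Either way, you need an argument that pins down the \emph{direction} of contraction, not just the singular-value magnitudes of $T_L^{E_0}$. As a secondary point, your geometric-resolvent step is also garbled: $G^+(0,-1;E_0)$ is undefined since $-1\notin[0,L]$, and the ``residual'' you must control is an off-diagonal Green function of $M_{\Lambda_L}$, which Wegner alone does not bound in the form you invoke.
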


As remarked earlier, the proof of dynamical localization now follows using the bootstrap multiscale analysis of Germinet and Klein \cite{GeKle2001}, see also the survey \cite{Klein2008} which stresses that a wide range of localization properties follow for a broad class of models once input assumptions such as provided above have been shown. Strictly speaking, the works \cite{GeKle2001} and \cite{Klein2008} only discuss the infinite volume version of (\ref{dynloc}). The finite volume version stated above turns out to be equivalent by Theorem~6.1 of \cite{GeKlo2012}.

\section{Applications} \label{sec:app}

\subsection{Checking contractivity and irreducibility} \label{subsec:checking}

We showed in Theorem~\ref{BJMdynloc} that if a random block Jacobi matrix satisfies the assumption (\ref{DefCI}) on some open interval $I$, then it exhibits dynamical localization in the form (\ref{dynloc}). To apply Theorem~\ref{BJMdynloc} and, in particular, prove Theorem~\ref{thm:main}, we have to verify (\ref{DefCI}) in concrete examples. For this we will use the criterion of Zariski-denseness of the F\"urstenberg group, which Gol'dsheid and Margulis used in \cite{GoMa1989} to show positivity of Lyapunov exponents for the Anderson model on strips. More recently, this criterion has also been used in the proof of localization properties for continuum Anderson-type models with matrix-valued potential, see \cite{Bou2009} and references therein, and for a class of unitary random operators \cite{BM2013}.

Let $\R^{2\ell\times2\ell}$ denote the space of all $2\ell\times 2\ell$ real matrices, let us identify $\R^{2\ell\times2\ell}\cong\R^{(2\ell)^2}$, and let $\R[x_1,...,x_{(2\ell)^2}]$ denote the ring of all real-coefficient polynomials in the variables $x_1,...,x_{(2\ell)^2}$. The Zariski topology on $\R^{2\ell\times2\ell}$ is defined by declaring the following type of sets to be closed:
\begin{equation} 
V(S):=\{x\in\R^{(2\ell)^2}:\forall P\in S,\ P(x)=0\},\quad \mbox{for all $S\subset\R[x_1,...,x_{(2\ell)^2}].$}
\end{equation}
Such a set $V(S)$ is called an {\it algebraic variety} and is the set of common zeros of all polynomials from $S$. Then the Zariski topology on $\Sp_\ell(\R)$ is just the topology induced by the Zariski topology on $\R^{2\ell\times2\ell}$.

The {\it Zariski closure} $\Cl_Z(G)$ of a subset $G$ of $\Sp_\ell(\R)$ is the smallest closed set in the Zariski topology that contains $G$, i.e.\ if $G\subset\Sp_\ell(\R)$, then $\Cl_Z(G)$ is the set of zeros of polynomials vanishing on $G$. A subset $G'\subset G$ is said to be {\it Zariski-dense} in $G$ if $\Cl_Z(G')=\Cl_Z(G)$, i.e.\ each polynomial vanishing on $G'$ also vanishes on $G$.

\begin{thm}[Gol'dsheid-Margulis Criterion, \cite{GoMa1989}]\label{GMC}
Suppose $\set{B_n}_{n\in\N}\subset\Sp_\ell(\R)$ are i.i.d.\ random matrices with common distribution $\mu$, and $G_\mu:=\overline{\ip{\supp \mu}}$ is the F\"urstenberg group. If $G_\mu$ is Zariski-dense in $\Sp_\ell(\R)$, and thus $\Cl_Z(G_\mu)=\Sp_\ell(\R)$, then for every $p\in\set{1,...,\ell}$, $G_\mu$ is $p$-contracting and $L_p$-strongly irreducible.
\end{thm}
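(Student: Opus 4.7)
The plan is to reduce both properties to a single underlying principle: any nonempty Zariski-open subset of the Zariski-irreducible algebraic group $\Sp_\ell(\R)$ must meet every Zariski-dense subgroup. For each $p$, I would exhibit the set of matrices in $\Sp_\ell(\R)$ that witness $p$-contractivity as a nonempty Zariski-open set, and then derive $L_p$-strong irreducibility by showing that any finite family of proper invariant subspaces would trap $G_\mu$ inside a proper Zariski-closed subgroup, contradicting Zariski-density.

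For $p$-contractivity, consider
\begin{equation*}
\Omega_p:=\set{B\in\Sp_\ell(\R):\wedge^p B \text{ has a simple eigenvalue of strictly maximal modulus}}.
\end{equation*}
This set is Zariski-open in $\Sp_\ell(\R)$, since it is cut out by the non-vanishing of an explicit polynomial built from the discriminant of the characteristic polynomial of $\wedge^p B$ together with a polynomial condition separating the dominant eigenvalue from the next one in modulus. Nonemptiness of $\Omega_p$ is witnessed by any diagonal symplectic matrix $\diag(\lambda_1,\ldots,\lambda_\ell,\lambda_1^{-1},\ldots,\lambda_\ell^{-1})$ with $\lambda_1>\lambda_2>\cdots>\lambda_\ell>1$, whose action on $\wedge^p\R^{2\ell}$ has simple dominant eigenvalue $\lambda_1\cdots\lambda_p$. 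Since $\Sp_\ell(\R)$ is irreducible as an algebraic variety, $\Omega_p$ is Zariski-dense in $\Sp_\ell(\R)$, and Zariski-density of $G_\mu$ forces $G_\mu\cap\Omega_p\ne\emptyset$. Iterates of any $B\in G_\mu\cap\Omega_p$ then produce, via $\wedge^pB^n/\|\wedge^pB^n\|$, the rank-one limit required for $p$-contractivity.

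For $L_p$-strong irreducibility I would proceed by contradiction. If some finite union $W_1\cup\cdots\cup W_r$ of proper subspaces (in the appropriate $\Sp_\ell$-invariant ambient subspace of $\wedge^p\R^{2\ell}$) were $G_\mu$-invariant, then
\begin{equation*}
H:=\set{B\in\Sp_\ell(\R):B\text{ permutes }W_1,\ldots,W_r}
\end{equation*}
would be a Zariski-closed subgroup of $\Sp_\ell(\R)$ containing $G_\mu$. Zariski-density would give $H=\Sp_\ell(\R)$, and the usual topological connectedness of $\Sp_\ell(\R)$ would then force each $W_i$ to be individually $\Sp_\ell(\R)$-invariant, contradicting irreducibility of the symplectic action on $L_p$.

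The main obstacle is the representation-theoretic bookkeeping: one has to specify the subspace $L_p$---the primitive summand inside $\wedge^p\R^{2\ell}$ with respect to contraction by powers of the symplectic form in the sense of \cite{BL1985}---and verify that the $\Sp_\ell(\R)$-action on $L_p$ really is irreducible. This is a standard fact from the representation theory of classical groups, but it requires care because $\wedge^p\R^{2\ell}$ itself is reducible as an $\Sp_\ell$-module for $p\ge 2$. Once that representation-theoretic input is in hand, both halves of the criterion reduce to the same Zariski-open/Zariski-dense interplay sketched above, combined with the explicit construction of proximal elements in $\Sp_\ell(\R)$.
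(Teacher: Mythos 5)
The paper does not prove this theorem; it is cited from Gol'dsheid and Margulis \cite{GoMa1989} and used as a tool, so there is no ``paper proof'' to compare against. Judged on its own, the strong-irreducibility half of your proposal is the right shape: the stabilizer of a finite collection of subspaces is Zariski-closed, Zariski-density forces it to be all of $\Sp_\ell(\R)$, and connectedness of $\Sp_\ell(\R)$ plus irreducibility of the $\Sp_\ell(\R)$-module $L_p$ gives a contradiction. The $p$-contractivity half, however, contains a genuine error that collapses the argument.

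The claim that
\begin{equation*}
\Omega_p=\set{B\in\Sp_\ell(\R):\wedge^p B \text{ has a simple eigenvalue of strictly maximal modulus}}
\end{equation*}
is Zariski-open is false. Over $\R$, ``strictly maximal modulus'' compares magnitudes of (possibly complex) eigenvalues, which is a semi-algebraic condition defined by strict inequalities, not the non-vanishing of a polynomial. Already for $\ell=p=1$ one has $\Omega_1=\set{B\in\SL_2(\R):|\tr B|>2}$, whose complement contains the Euclidean-open set $\set{|\tr B|<2}$; a proper Zariski-closed subset of the irreducible variety $\SL_2(\R)$ has empty interior, so $\Omega_1$ cannot be the complement of one. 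The only genuinely Zariski-open condition in your list is non-vanishing of the discriminant of the characteristic polynomial of $\wedge^p B$, but distinct eigenvalues may well be complex-conjugate pairs of equal modulus, so this does not give proximality. Consequently the ``nonempty Zariski-open meets every Zariski-dense subgroup of an irreducible group'' principle simply does not apply here.

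This is not a repair-on-the-margins issue: the assertion that a Zariski-dense subgroup of $\Sp_\ell(\R)$ necessarily contains an element proximal on $L_p$ is essentially the hard core of the Gol'dsheid--Margulis theorem (see also the later treatments via Benoist--Labourie and Abels--Margulis--Soifer), and its proof requires real structure theory --- restricted roots, parabolic subgroups, and highest-weight considerations --- rather than a Zariski open/dense dichotomy. Your proposal silently absorbs the main content of the theorem into an incorrect topological claim.
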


Showing that the Lie groups $\Cl_Z(G_\mu)$ and $\Sp_\ell(\R)$ are equal is equivalent to showing that the associated Lie algebras $\mathfrak{G}_\ell$ and $\mathfrak{sp}_\ell(\R)$ are equal. The latter has dimension $\ell(2\ell+1)$ on account of the characterization
\begin{equation}\label{sympLiealg}
\mathfrak{sp}_\ell(\R)=\set{\left(\begin{array}{cc} a & b_1 \\ b_2 & -a^t \end{array}\right):a\in \R^{\ell \times \ell} ;\ b_1,b_2\in \R^{\ell \times \ell} \text{ symmetric}}.
\end{equation}
Thus, one can conclude Zariski-denseness of $G_\mu$ in $\Sp_\ell(\R)$ if one can find $\ell(2\ell+1)$ linearly independent elements in $\mathfrak{G}_\ell$.

Applications of Theorem~\ref{BJMdynloc} to random block Jacobi matrices are found by verifying the Gol'dsheid-Margulis criterion for the F\"urstenberg groups $G_{\mu_E}$ from (\ref{Furstgroup}) for all $E\in I$. Naturally, for any given model, we will be interested in finding the largest open interval where this holds. For the Anderson model on a strip it was shown in \cite{GoMa1989} that one may choose $I=\R$, i.e.\ the model is dynamically localized at all energies. The authors of \cite{BS2006} consider an explicit example with $\ell=2$ (associated with a continuum Anderson-type model on two coupled strings), where they can verify Zariski-denseness for all $E$ outside a discrete set of critical energies. For the model considered in Theorem~\ref{thm:main} we can stay quite close to the arguments of \cite{BS2006} and see in the next subsection that one may choose $I= \R \setminus \{0\}$. We will also see that $E=0$ is indeed a critical energy at which the F\"urstenberg group is not Zariski-dense.

\subsection{Proof of Theorem~\ref{thm:main}} \label{subsec:anisotropic}

We now return to the example of a random block Jacobi matrix considered in Theorem~\ref{thm:main}. As explained above, Theorem~\ref{thm:main} follows from Theorem~\ref{BJMdynloc} and

\begin{thm}\label{Zariskidenseneq0}
Let $M$ the random block Jacobi matrix of type (\ref{BJM}) with $\ell=2$, $S_n = S(\gamma)$ for some $\gamma \in (0,1) \cup (1,\infty)$, and $V_n = \nu_n \sigma^z$, where $(\nu_n)$ are i.i.d.\ random variables with non-trivial compactly supported distribution $\rho$.

Then, for all $E\ne0$, the F\"urstenberg group $G_{\mu_E}$ is Zariski-dense in $\Sp_2(\R)$. In particular, $\gamma_1(E)>\gamma_2(E)>0$ for all $E\ne0$.
\end{thm}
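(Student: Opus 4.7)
The plan is to verify the Gol'dsheid--Margulis criterion (Theorem~\ref{GMC}) by producing $\dim\mathfrak{sp}_2(\R)=10$ linearly independent elements in the Lie algebra $\mathfrak{G}_2$ of $\Cl_Z(G_{\mu_E})$. The basic engine for generating Lie algebra elements is the observation that if $u=I+N\in G_{\mu_E}$ is unipotent with $N$ nilpotent, then the cyclic group $\{u^n:n\in\Z\}$ is Zariski-dense in the one-parameter subgroup $\{\exp(sN):s\in\R\}$, so $N\in\mathfrak{G}_2$. Further elements will be produced by Lie brackets and by the adjoint actions $\operatorname{Ad}(A^E(\nu))$ for $\nu\in\supp\rho$, both of which preserve $\mathfrak{G}_2$. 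The approach closely follows the template of \cite{BS2006}.

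The starting point is a direct calculation showing that for any $\nu_0\ne\nu_1$ in $\supp\rho$ the product simplifies to
\[
A^E(\nu_1)\,A^E(\nu_0)^{-1}=\begin{pmatrix} I & 0 \\ (\nu_1-\nu_0)\sigma^z & I \end{pmatrix},
\]
which is unipotent and immediately yields $X_1:=\bigl(\begin{smallmatrix}0&0\\ \sigma^z&0\end{smallmatrix}\bigr)\in\mathfrak{G}_2$. The analogous product $A^E(\nu_0)^{-1}A^E(\nu_1)$ is also unipotent---here one invokes the identity $(S(\gamma)^{-1})^t\sigma^z S(\gamma)^{-1}=-\sigma^z/(\gamma^2-1)$, which in turn depends on $\gamma\ne\pm1$---and gives $X_2:=\bigl(\begin{smallmatrix}0&\sigma^z\\0&0\end{smallmatrix}\bigr)\in\mathfrak{G}_2$. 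Their bracket $[X_1,X_2]=\diag(-I,I)$ provides a third element. The remaining seven directions I intend to produce by computing $\operatorname{Ad}(A^E(\nu))(X_i)$ for suitable $\nu\in\supp\rho$: an explicit expansion, using $\sigma^z(\nu\sigma^z-E)=\nu I -E\sigma^z$, shows that $A^E(\nu)X_1A^E(\nu)^{-1}$ has nonzero entries in all four blocks, and in particular that the identity matrix $I$ appears in the lower-left block with coefficient proportional to $\nu E$. Choosing $\nu\ne 0$ (possible by non-triviality of $\rho$) and using the hypothesis $E\ne 0$, one subtracts appropriate linear combinations of $X_1,X_2,[X_1,X_2]$ to isolate the new direction $\bigl(\begin{smallmatrix}0&0\\I&0\end{smallmatrix}\bigr)$; a parallel calculation with $X_2$ yields $\bigl(\begin{smallmatrix}0&I\\0&0\end{smallmatrix}\bigr)$.

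Iterating this adjoint-plus-bracket procedure, potentially with a second value $\nu'\in\supp\rho$ and higher-order commutators, should generate the remaining off-diagonal symmetric direction (corresponding to $\sigma^x$) in each of the blocks $b_1,b_2$, and complete the $a$-block to its full $4$-dimensional space. The main obstacle is the bookkeeping: checking at each step that the newly manufactured matrix is linearly independent of those already in hand and that no direction of $\mathfrak{sp}_2(\R)$ is missed. The role of the hypothesis $E\ne 0$ is visibly decisive: the $\nu E$ coefficient exploited in the first adjoint step already vanishes at $E=0$, preventing the extraction of the $I$-directions, and more globally (as discussed in Section~\ref{subsec:crit}) $\mathfrak{G}_2$ collapses to a proper subalgebra at zero energy. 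Once Zariski-density is established, the ``in particular'' statement $\gamma_1(E)>\gamma_2(E)>0$ follows from Theorem~\ref{GMC} combined with Proposition~IV.3.4 of \cite{BL1985}.
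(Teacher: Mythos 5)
Your general strategy is the same as the paper's: verify the Gol'dsheid--Margulis criterion by producing $10$ linearly independent elements of $\mathfrak{G}_2(E)$, starting from the unipotent $M((\nu_1-\nu_0)\sigma^z)$, and then conjugating and bracketing. Your first few steps are correct: $X_1=\bigl(\begin{smallmatrix}0&0\\\sigma^z&0\end{smallmatrix}\bigr)$, $X_2=\bigl(\begin{smallmatrix}0&\sigma^z\\0&0\end{smallmatrix}\bigr)$ (your identity $(S^{-1})^t\sigma^zS^{-1}=\sigma^z/(1-\gamma^2)$ is right), and $[X_1,X_2]=\diag(-I,I)$ are all in the Lie algebra, and the computation of $\operatorname{Ad}(A^E(\nu))X_1$ does produce the $I$ direction in the lower-left block with coefficient $\propto\nu E$ after subtracting off two values $\nu\ne\nu'$. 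But from there the proposal has a concrete error and then a genuine gap. The error: the ``parallel calculation with $X_2$'' does not yield $\bigl(\begin{smallmatrix}0&I\\0&0\end{smallmatrix}\bigr)$. Since $X_2$ commutes with nothing useful here, one computes directly that $A^E(\nu)\,X_2\,A^E(\nu)^{-1}=-(1-\gamma^2)X_1$, i.e.\ nothing new; the reversed conjugation $A^E(\nu)^{-1}X_2A^E(\nu)$ is what produces $I$ in the upper-right, but it simultaneously injects $\sigma^x$ and skew-symmetric terms that you have not yet accounted for, so the clean ``subtract and isolate'' you describe does not go through.

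The larger gap is the last paragraph. The six elements spanned by $\sigma^z$- and $I$-blocks form (once you actually have all six) only a $6$-dimensional piece, and to reach the remaining four directions involving $\sigma^x$ and the skew generator you \emph{must} conjugate by something whose blocks mix the diagonal and off-diagonal of $2\times2$ matrices. A single adjoint by $A^E(\nu)$ acting on $X_1$ or $X_2$ never produces $\sigma^x$: the relevant block identities are $S\sigma^zS^t=(1-\gamma^2)\sigma^z$ (no $\sigma^x$) versus $SS^t=(1+\gamma^2)I-2\gamma\sigma^x$, so $\sigma^x$ first appears only after a \emph{two}-step conjugation, equivalently by $A_0(E)^2$. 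The paper handles this carefully: it first shows $A_0(E)\in\Cl_Z(G_{\mu_E})$, conjugates the seed element by $A_0(E)^{\pm1}$ and $A_0(E)^{\pm2}$ to manufacture four elements $A_1,\dots,A_4$ (the $A_0^2$ step is where $\sigma^x$ enters), works in a $\UU$-conjugated basis so the blocks become $\sigma^x$-diagonal and the algebra is simpler, and then verifies by explicit computation that the six brackets $[A_{j_1},A_{j_2}]$ span the complementary $6$-dimensional subspace (\ref{compspace}). In your proposal this last phase---precisely where the real work lies, including the case analysis on the parameters $b_j=E\pm\gamma\pm1$ used in the paper to keep the extraction well-defined---is replaced by ``should generate the remaining\ldots.'' As written, the argument is not complete.
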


Before we prove Theorem \ref{Zariskidenseneq0}, let us lay some foundations. In the special case considered here the transfer matrices (\ref{tmprop}) can be factored as
\begin{equation} 
A_n^E = 
\left(
\begin{array}{cccc}
1 & 0 & 0 & 0 \\
0 & 1 & 0 & 0 \\
\nu_n & 0 & 1 & 0 \\
0 & -\nu_n & 0 & 1 \\
\end{array}
\right)
\left(
\begin{array}{cccc}
0 & 0 & \frac{1}{1-\gamma^2} & \frac{\gamma}{1-\gamma^2} \\
0 & 0 & -\frac{\gamma}{1-\gamma^2} & -\frac{1}{1-\gamma^2} \\
-1 & \gamma & -\frac{E}{1-\gamma^2} & -\frac{\gamma E}{1-\gamma^2} \\
-\gamma & 1 & \frac{\gamma E}{1-\gamma^2} & \frac{E}{1-\gamma^2} \\
\end{array}
\right),
\end{equation}
separating the random and energy parameters.

For any $2\times2$ matrix $Q$, if we define
\begin{equation} 
M(Q):=\left(
\begin{array}{cc}
I & 0 \\
Q & I \\
\end{array}
\right),
\end{equation}
then we may write the above factorization as
\begin{equation} 
A_n^E= M(\nu_n \sigma^z) A_0(E),
\end{equation}
where $A_0(E)$ is defined as the second factor. Thus the F\"urstenberg group is 
\begin{equation} 
G_{\mu_E} := \overline{\ip{\supp \mu_E}}= \overline{\langle M(\nu_0 \sigma^z)A_0(E):\nu_0\in\supp \rho\rangle}.
\end{equation}
For our proof, we will also need the following standard fact from Lie theory:

\begin{lemma}\label{Lieconj}
If $G$ is a matrix Lie group, and $\frak{g}$ is its Lie algebra, then we have
\begin{equation} 
AXA^{-1}\in\frak{g}
\end{equation}
whenever $X\in\frak{g}$ and $A\in G$.
\end{lemma}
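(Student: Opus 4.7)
The plan is to use the standard characterization of a matrix Lie algebra in terms of one‑parameter subgroups: $X\in\mathfrak{g}$ if and only if $e^{tX}\in G$ for every $t\in\R$. With this characterization in hand, the lemma reduces to two elementary observations: closure of $G$ under group operations, and the behavior of the matrix exponential under conjugation.

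First, I would fix $X\in\mathfrak{g}$ and $A\in G$, and consider the curve $t\mapsto A e^{tX} A^{-1}$. Since $A,A^{-1}\in G$ (as $G$ is a group) and $e^{tX}\in G$ for every $t\in\R$ (by the characterization of $\mathfrak{g}$), closure of $G$ under matrix multiplication gives $A e^{tX} A^{-1}\in G$ for all $t\in\R$.

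Next, I would establish the conjugation identity
\begin{equation}
A e^{tX} A^{-1} \;=\; e^{t\, A X A^{-1}},
\end{equation}
which follows by term‑by‑term expansion of the (absolutely convergent) exponential series:
\begin{equation}
A e^{tX} A^{-1} \;=\; A\!\left(\sum_{k=0}^\infty \frac{t^k X^k}{k!}\right)\! A^{-1} \;=\; \sum_{k=0}^\infty \frac{t^k (AXA^{-1})^k}{k!} \;=\; e^{t(AXA^{-1})},
\end{equation}
where the middle equality uses the telescoping identity $A X^k A^{-1} = (AXA^{-1})^k$.

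Combining the two steps, $e^{t(AXA^{-1})}\in G$ for every $t\in\R$, and invoking the characterization of $\mathfrak{g}$ in the reverse direction gives $AXA^{-1}\in\mathfrak{g}$, as desired. There is no real obstacle here; the only subtlety worth flagging is that one must be using a definition of $\mathfrak{g}$ that is truly equivalent to ``$e^{tX}\in G$ for all $t\in\R$'' (which is indeed the working definition for matrix Lie groups throughout this paper, since $G\subset\Sp_\ell(\R)$ is closed and hence a genuine Lie subgroup). Given that the result is classical, I would in fact write it as a brief remark and simply cite a standard reference such as Hall's \emph{Lie Groups, Lie Algebras, and Representations}, rather than spelling out the two displays above in the body of the paper.
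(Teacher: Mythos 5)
Your proposal is correct: the one-parameter-subgroup characterization of $\mathfrak{g}$ together with the identity $Ae^{tX}A^{-1}=e^{t\,AXA^{-1}}$ is the canonical argument, and it applies here since the relevant groups are closed subgroups of $\Sp_\ell(\R)$. The paper itself gives no proof at all --- it invokes the lemma as a ``standard fact from Lie theory'' --- so your write-up (or, as you suggest, a citation to a standard reference) supplies exactly the argument the paper is implicitly relying on.
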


\begin{proof}[Proof of Theorem \ref{Zariskidenseneq0}]
As in Subsection \ref{subsec:checking}, let $\frak{G}_2(E)$ denote the Lie algebra of $\Cl_Z(G_{\mu_E})$, and $\frak{sp}_2(\R)$ the Lie algebra of $\Sp_2(\R)$. As discussed there, it suffices to show that $\frak{G}_2(E)=\frak{sp}_2(\R)$ for all $E\ne0$. Let
\begin{equation}\label{UandblockU}
U=\frac{1}{\sqrt{2}}\left(
\begin{array}{cc}
1 & 1 \\
1 & -1 \\
\end{array}
\right)\quad\quad\text{and}\quad\quad
\UU:=\left(
\begin{array}{cc}
U & 0 \\
0 & U \\
\end{array}
\right).
\end{equation}
Because $U$ is symmetric and self-inverse, one easily checks that $\UU$ is symplectic. As a consequence of Lemma \ref{Lieconj}, one can show that
\begin{equation} 
\frak{G}_2(E)=\frak{sp}_2(\R) \quad \mbox{if and only if} \quad \tilde{\frak{G}}_2(E):=\UU \frak{G}_2(E) \UU=\frak{sp}_2(\R).
\end{equation}
It thus suffices to construct 10 linearly independent elements belonging to $\tilde{\frak{G}}_2(E)\subset\frak{sp}_2(\R)$. We will need the following tool that allows us to move back and forth between the Lie group and Lie algebra, see e.g.\ \cite{BS2006} for a proof.

\begin{lemma}\label{MQlemma}
For $Q\in\R^{2\times 2}$, we have
\begin{equation} 
M(Q)\in \Cl_Z(G_{\mu_E}) \quad \mbox{if and only if} \quad
\left(
\begin{array}{cc}
0 & 0 \\
Q & 0 \\
\end{array}
\right)
\in \frak{G}_2(E).
\end{equation}
\end{lemma}

We know that $M(\nu_0 \sigma^z)A_0(E)\in\Cl_Z(G_{\mu_E})$ whenever $\nu_0\in\supp \rho$. Our first objective will be to show that $A_0(E)\in\Cl_Z(G_{\mu_E})$.

As $\rho$ is non-trivial, let us take $a,b\in\supp \rho$, $a\ne b$. Then $M(a\sigma^z)A_0(E)\in G_{\mu_E}$ and $M(b\sigma^z)A_0(E)\in G_{\mu_E}$. This implies that
\begin{eqnarray}
M((a-b)\sigma^z) &=& M(a\sigma^z)A_0(E)[M(b\sigma^z)A_0(E)]^{-1} \label{tmcancel}\\
&\in& G_{\mu_E}\subset\Cl_Z(G_{\mu_E}). \nonumber
\end{eqnarray}
Lemma \ref{MQlemma} then implies
\begin{equation} 
\left(
\begin{array}{cc}
0 & 0 \\
(a-b)\sigma^z & 0 \\
\end{array}
\right)\in \frak{G}_2(E),
\end{equation}
and being a Lie algebra, all scalar multiples also lie in $\frak{G}_2(E)$. In particular,
\begin{equation}\label{diagcc}
\left(
\begin{array}{cc}
0 & 0 \\
c \sigma^z & 0 \\
\end{array}
\right)\in \frak{G}_2(E) \quad \mbox{for all $c\in\R$}.
\end{equation}
Again by Lemma \ref{MQlemma}, we have $M(a\sigma^z)\in\Cl_Z(G_{\mu_E})$ and thus
\begin{equation}\label{A0Eingroup}
A_0(E)=M(a\sigma^z)^{-1}[M(a\sigma^z)A_0(E)]\in\Cl_Z(G_{\mu_E}).
\end{equation}

We now construct the 10 linearly independent elements of $\tilde{\frak{G}}_2(E)$, which will complete the proof. In the following, any matrix with a superscript ``temp" will be replaced by a simpler matrix soon thereafter. We start with
\begin{equation} 
A_1 :=
\UU
\left(
\begin{array}{cc}
0 & 0 \\
\sigma^z & 0 \\
\end{array}
\right)
\UU
=
\left(
\begin{array}{cccc}
0 & 0 & 0 & 0 \\
0 & 0 & 0 & 0 \\
0 & 1 & 0 & 0 \\
1 & 0 & 0 & 0 \\
\end{array}
\right) \in \tilde{\frak{G}}_2(E).
\end{equation}
Conjugating via Lemma~\ref{Lieconj} yields
\begin{equation} 
A_2 := -(1-\gamma^2)
\UU
A_0(E)^{-1}
\left(
\begin{array}{cc}
0 & 0 \\
\sigma^z & 0 \\
\end{array}
\right)
A_0(E)
\UU
=
\left(
\begin{array}{cccc}
0 & 0 & 0 & 1 \\
0 & 0 & 1 & 0 \\
0 & 0 & 0 & 0 \\
0 & 0 & 0 & 0 \\
\end{array}
\right) \in \tilde{\frak{G}}_2(E)
\end{equation}
and
\begin{equation} 
A_3^{temp} := -(1-\gamma^2) \UU A_0(E)
\left(
\begin{array}{cc}
0 & 0 \\
\sigma^z & 0 \\
\end{array}
\right)
A_0(E)^{-1} \UU
=
\left(
\begin{array}{cccc}
0 & E & 0 & 1 \\
E & 0 & 1 & 0 \\
0 & -E^2 & 0 & -E \\
-E^2 & 0 & -E & 0 \\
\end{array}
\right)\in\tilde{\frak{G}}_2(E)
\end{equation}
By taking linear combinations of $A_1, A_2, A_3^{temp}$ and using that $E\ne0$, it is clear we can produce
\begin{equation} 
A_3 := 
\left(
\begin{array}{cccc}
0 & 1 & 0 & 0 \\
1 & 0 & 0 & 0 \\
0 & 0 & 0 & -1 \\
0 & 0 & -1 & 0 \\
\end{array}
\right) \in \tilde{\frak{G}}_2(E).
\end{equation}
Furthermore, if we define $b_1:=E+\gamma+1$, $b_2:=E-\gamma+1$, $b_3:=E+\gamma-1$, and $b_4:=E-\gamma-1$, then
\begin{eqnarray}
A_4^{temp} &:=& -(1-\gamma^2)^2 \UU A_0(E)^2
\left(
\begin{array}{cc}
0 & 0 \\
\sigma^z & 0 \\
\end{array}
\right)
A_0(E)^{-2} \UU \\
&=& \left(
\begin{array}{cccc}
0 & Eb_2b_3 & 0 & E^2 \\
Eb_1b_4 & 0 & E^2 & 0 \\
0 & -b_1b_2b_3b_4 & 0 & -Eb_1b_4 \\
-b_1b_2b_3b_4 & 0 & -Eb_2b_3 & 0 \\
\end{array}
\right) \: \in \: \tilde{\frak{G}}_2(E) \nonumber
\end{eqnarray}
Since $E\ne 0$ and $\gamma\ne 0$, one can see that $Eb_1b_4\ne Eb_2b_3$. After separating into the five cases $b_1=0$, $b_2=0$, $b_3=0$, $b_4=0$, and all $b_j\ne 0$, it is not difficult to find a linear combination of $A_1$, $A_2$, $A_3$, and $A_4^{temp}$ that yields
\begin{equation} 
A_4 := 
\left(
\begin{array}{cccc}
0 & 1 & 0 & 0 \\
0 & 0 & 0 & 0 \\
0 & 0 & 0 & 0 \\
0 & 0 & -1 & 0 \\
\end{array}
\right) \in \tilde{\frak{G}}_2(E).
\end{equation}

The matrices $A_1,A_2,A_3,A_4$ are linearly independent and span the following 4-dimensional subspace of $\frak{sp}_2(\R)$:
\begin{equation} 
\left\{
\left(
\begin{array}{cccc}
0 & a & 0 & d \\
b & 0 & d & 0 \\
0 & c & 0 & -b \\
c & 0 & -a & 0 \\
\end{array}
\right): a,b,c,d\in\R
\right\}.
\end{equation}

It turns out, and is easily checked by calculation, that the complementary 6-dimensional subspace
\begin{equation} \label{compspace}
\left\{
\left(
\begin{array}{cccc}
a & 0 & d_1 & 0 \\
0 & b & 0 & d_2 \\
c_1 & 0 & -a & 0 \\
0 & c_2 & 0 & -b \\
\end{array}
\right): a,b,c_1,c_2,d_1,d_2\in\R
\right\}
\end{equation}
of $\frak{sp}_2(\R)$ is spanned by the six matrix commutators $[A_{j_1},A_{j_2}]$, $1\le j_1<j_2\le 4$, of the first four matrices. Here, the commutator $[A,B]:=AB-BA$ is the Lie bracket of the Lie algebra $\tilde{\frak{G}}_2(E)$. This completes the construction of $10$ linearly independent elements of $\tilde{\frak{G}}_2(E)$ and thus the proof of Theorem~\ref{Zariskidenseneq0}.
\end{proof}

\subsection{Critical energy $E=0$} \label{subsec:crit}

We have just shown that for all $E \ne 0$ the F\"urstenberg group $G_{\mu_E}$ is Zariski-dense in $\Sp_2(\R)$ and thus, by the Gol'dsheid-Margulis criterion, $p$-contracting and $L_p$-strongly irreducible for $p=1$ and $p=2$. It is easy to see that at $E=0$ the F\"urstenberg group is not strongly irreducible in $\R^4$ (which is the same as saying that it is not $L_1$-strongly irreducible). Thus (\ref{DefCI}) does not hold for any interval $I$ containing $0$.

However, we can still show by a direct argument, see part (ii) of the following result, that the leading Lyapunov exponent $\gamma_1(0)$ is strictly positive and distinct from $\gamma_2(0)$. After proving this we will discuss that typically $\gamma_2(0)$ is also positive, while there are exceptional cases where it may vanish.

\begin{thm} \label{thm:zeroenergy}
Under the conditions of Theorem~\ref{Zariskidenseneq0},
\begin{itemize}
\item[(i)] $G_{\mu_0}$ is not strongly irreducible in $\R^4$, and
\item[(ii)] $\gamma_1(0)>\gamma_2(0)\ge0$.
\end{itemize}
\end{thm}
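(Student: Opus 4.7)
The plan is to exploit a special algebraic feature at $E=0$: the identity $S(\gamma)^2=(1-\gamma^2)I$ together with the off-diagonal structure of $\sigma^z$ causes the transfer-matrix recursion to decouple into two independent scalar recursions. This decoupling produces explicit invariant subspaces for $G_{\mu_0}$ (yielding (i)) and reduces the Lyapunov-exponent analysis to a standard Anderson-model computation (yielding (ii)).

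First I would write the $E=0$ equation $-S^t u(k-1)+\nu_k\sigma^z u(k)-S u(k+1)=0$ component-wise with $u(k)=(a_k,b_k)^t$, and take the sum and the difference of the two component equations. A short calculation shows that $d_k:=a_k-b_k$ and $s_k:=a_k+b_k$ satisfy the uncoupled scalar Jacobi recursions
\[
-(1+\gamma)\,d_{k-1}+\nu_k d_k-(1-\gamma)\,d_{k+1}=0,\qquad -(1-\gamma)\,s_{k-1}+\nu_k s_k-(1+\gamma)\,s_{k+1}=0.
\]
Translated into the $4$-dimensional state space of the transfer matrix (with state vector $(u(k-1)^t,(S u(k))^t)^t$), this yields a direct-sum decomposition $\R^4=V_d\oplus V_s$ with
\[
V_d=\s\bigl\{(1,-1,0,0)^t,\,(0,0,1,1)^t\bigr\},\qquad V_s=\s\bigl\{(1,1,0,0)^t,\,(0,0,1,-1)^t\bigr\},
\]
both subspaces being invariant under every transfer matrix $A_n^0=M(\nu_n\sigma^z)A_0(0)$ and hence under $G_{\mu_0}$. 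Part (i) is then immediate: $V_d$ is a proper $G_{\mu_0}$-invariant subspace of $\R^4$, so $G_{\mu_0}$ is not even irreducible, let alone strongly irreducible.

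For part (ii) I would compute the four Lyapunov exponents of the full system by treating the two $2$-dimensional subsystems separately. The transfer matrix $T_k^d$ of the $d$-recursion satisfies $|\det T_k^d|=|(1+\gamma)/(1-\gamma)|=:\alpha^2$, and $\alpha>1$ for every admissible $\gamma\in(0,1)\cup(1,\infty)$. Factoring out $\alpha$ and applying a fixed diagonal conjugation (together, in the regime $\gamma>1$, with the substitution $d_k\mapsto(-1)^k d_k$ to absorb the sign of $\det T_k^d$) identifies the normalized products with those of the transfer matrix of the one-dimensional discrete Anderson model at energy $0$ with i.i.d.\ potential $\nu_k/\sqrt{|1-\gamma^2|}$. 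F\"urstenberg's theorem applies, since this single-site distribution is non-trivial and the resulting F\"urstenberg group is Zariski-dense in $\SL_2(\R)$ at every real energy, and it yields Lyapunov exponents $\pm\gamma_A$ with $\gamma_A>0$. Consequently the $d$-subsystem contributes the Lyapunov pair $\log\alpha\pm\gamma_A$, and an analogous argument for the $s$-subsystem (or the $\pm$-symmetry of the Lyapunov exponents forced by the symplectic structure) yields $-\log\alpha\pm\gamma_A$. Ordering the four numbers $\pm\log\alpha\pm\gamma_A$ by absolute value gives
\[
\gamma_1(0)=\log\alpha+\gamma_A,\qquad \gamma_2(0)=|\log\alpha-\gamma_A|,
\]
and hence $\gamma_1(0)-\gamma_2(0)=2\min(\log\alpha,\gamma_A)>0$ since both $\log\alpha$ and $\gamma_A$ are strictly positive, proving $\gamma_1(0)>\gamma_2(0)\ge 0$.

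The main technical nuisance I expect is the bookkeeping in the $d$-subsystem when comparing it with a genuine Anderson model: the sign of $\det T_k^d$ changes as $\gamma$ crosses $1$, so the conjugating diagonal matrix and the alternating-sign substitution must be chosen case by case to make F\"urstenberg's theorem applicable uniformly across both admissible regimes. Once this identification is made, the remainder of the argument is routine.
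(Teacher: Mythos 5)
Your sum-and-difference substitution is exactly the conjugation by the block matrix $\mathbb{U}=\mathrm{diag}(U,U)$ with $U=\frac{1}{\sqrt2}\bigl(\begin{smallmatrix}1&1\\1&-1\end{smallmatrix}\bigr)$ used in the paper, followed by a permutation that groups each $2$-dimensional subsystem together; so your strategy is in essence the same as the published proof, and part (i) together with the computation of $\gamma_1(0),\gamma_2(0)$ for $\gamma\in(0,1)$ is correct.

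There is, however, a genuine gap in your handling of the regime $\gamma>1$. You propose to absorb the sign of $\det T_k^d$ by the substitution $d_k\mapsto(-1)^k d_k$. This is a conjugation of the one-step transfer matrix by the $k$-dependent diagonal matrix $\diag((-1)^{k-1},(-1)^k)$, which changes the determinant by the factor $(-1)^{k+1}/(-1)^{k-1}=1$: the sign of the determinant is preserved, not flipped. More generally, for a $2\times 2$ matrix any scalar rescaling $T\mapsto cT$ multiplies the determinant by $c^2>0$, and any conjugation leaves the determinant unchanged, so no substitution of this type can turn a negative-determinant one-step transfer matrix into an $\SL_2(\R)$ matrix. (A period-$4$ sign pattern such as $(-1)^{\lfloor k/2\rfloor}$ does change the sign, but it destroys the i.i.d.\ structure required by F\"urstenberg's theorem.) The paper resolves this by passing to the two-step transfer matrices $D_{2n}D_{2n-1}$ and $F_{2n}F_{2n-1}$, which do have positive determinant and remain i.i.d.; the price is that the resulting $\SL_2(\R)$-matrices are no longer Anderson-model transfer matrices, so non-compactness and strong irreducibility of the associated F\"urstenberg group have to be verified directly rather than cited as a known fact. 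Your proposal treats both of these points as ``routine bookkeeping,'' but the first one, as stated, is incorrect, and the second one requires an actual (if short) argument.
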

\begin{proof}

(i) Since $G_{\mu_0}$ is the smallest closed subgroup of $\Sp_2(\R)$ containing the transfer matrices
\begin{equation}
A_n^0=\left(
\begin{array}{cccc}
0 & 0 & \frac{1}{1-\gamma^2} & \frac{\gamma}{1-\gamma^2} \\
0 & 0 & -\frac{\gamma}{1-\gamma^2} & -\frac{1}{1-\gamma^2} \\
-1 & \gamma & \frac{\nu_n}{1-\gamma^2} & \frac{\gamma \nu_n}{1-\gamma^2} \\
-\gamma & 1 & \frac{\gamma\nu_n}{1-\gamma^2} & \frac{\nu_n}{1-\gamma^2} \\
\end{array}
\right), \quad n\in\N,
\end{equation}
it follows that $\tilde G_{\mu_0}:=\UU G_{\mu_0}\UU$ (with $\UU$ from (\ref{UandblockU})) is the smallest closed subgroup of $\Sp_2(\R)$ containing the transformed transfer matrices
\begin{equation}
B_n^0:=\UU A_n^0\UU=\left(
\begin{array}{cccc}
0 & 0 & 0 & \frac{1}{1+\gamma} \\
0 & 0 & \frac{1}{1-\gamma} & 0 \\
0 & -1-\gamma & \frac{\nu_n}{1-\gamma} & 0 \\
-1+\gamma & 0 & 0 & \frac{\nu_n}{1+\gamma} \\
\end{array}
\right), \quad n\in\N.
\end{equation}
All of these matrices and, as a consequence, all elements of $\tilde G_{\mu_0}$ are of the form
\begin{equation} 
\left(
\begin{array}{cccc}
a_{11} & 0 & 0 & a_{14} \\
0 & a_{22} & a_{23} & 0 \\
0 & a_{32} & a_{33} & 0 \\
a_{41} & 0 & 0 & a_{44} \\
\end{array}
\right).
\end{equation}
Now it is obvious that $\tilde{G}_{\mu_0}$ (and thus $G_{\mu_0}$) has non-trivial invariant subspaces in $\R^4$ and therefore is not strongly irreducible.

(ii) The upshot of the above argument is that the transfer matrices $A_n^0$ are similar to direct sums of $2\times 2$ transfer matrices, allowing to analyze them more directly with the classical F\"urstenberg theorem. More precisely, if
\begin{equation}
 P=\left(
\begin{array}{cccc}
1 & 0 & 0 & 0 \\
0 & 0 & 1 & 0 \\
0 & 0 & 0 & 1 \\
0 & 1 & 0 & 0 \\
\end{array}
\right),
\end{equation}
then
\begin{equation}
C_n = P^{-1} B_n^0 P = P^{-1} \UU A_n^0 \UU P = \left(
\begin{array}{cccc}
0 & \frac{1}{1+\gamma} & 0 & 0 \\
-1+\gamma & \frac{\nu_n}{1+\gamma} & 0 & 0 \\
0 & 0 & 0 & \frac{1}{1-\gamma} \\
0 & 0 & -1-\gamma & \frac{\nu_n}{1-\gamma} \\
\end{array}
\right)=\left(
\begin{array}{cc}
D_n & 0 \\
0 & F_n \\
\end{array}
\right),
\end{equation}
where
\begin{equation} 
D_n:=\left(
\begin{array}{cc}
0 & \frac{1}{1+\gamma} \\
-1+\gamma & \frac{\nu_n}{1+\gamma} \\
\end{array}
\right)\quad\text{and}\quad
F_n:=\left(
\begin{array}{cc}
0 & \frac1{1-\gamma} \\
-1-\gamma & \frac{\nu_n}{1-\gamma} \\
\end{array}
\right).
\end{equation}

Due to symplecticity, the Lyapunov exponents of $\set{A_n^0}_{n\in\N}$ are of the form $\gamma_1(0) \ge \gamma_2(0) \ge 0 \ge -\gamma_2(0) \ge -\gamma_1(0)$. 

For the proof of $\gamma_1(0) > \gamma_2(0)$ we will first consider the case $\gamma \in (0,1)$. Let $\gamma_1^D$ and $\gamma_2^D$ be the Lyapunov exponents of $\set{D_n}_{n\in\N}$ and $\gamma_1^F$ and $\gamma_2^F$ the Lyapunov exponents of $\set{F_n}_{n\in\N}$ (whose existence follows by the argument below). Then we have
\begin{equation}\label{Lyapseteq1}
\set{\gamma_1(0),\gamma_2(0),-\gamma_2(0),-\gamma_1(0)}=\set{\gamma_1^D,\gamma_2^D,\gamma_1^F,\gamma_2^F}.
\end{equation}

Let us define new matrices
\begin{eqnarray}
\tilde D_n &:=& \sqrt{\frac{1+\gamma}{1-\gamma}}D_n=\left(
\begin{array}{cc}
0 & \frac{1}{\sqrt{1-\gamma^2}} \\
-\sqrt{1-\gamma^2} & \frac{\nu_n}{\sqrt{1-\gamma^2}} \\
\end{array}
\right) \\
\tilde F_n &:=& \sqrt{\frac{1-\gamma}{1+\gamma}}F_n=\left(
\begin{array}{cc}
0 & \frac{1}{\sqrt{1-\gamma^2}} \\
-\sqrt{1-\gamma^2} & \frac{\nu_n}{\sqrt{1-\gamma^2}} \\
\end{array}
\right)=\tilde D_n.
\end{eqnarray}
Since $\tilde D_n$ is a constant multiple of $D_n$, the logarithm in the definition of the Lyapunov exponent turns this constant multiple into a shift, and (\ref{Lyapseteq1}) then becomes
\begin{equation}\label{Lyapseteq2}
\set{\gamma_1(0),\gamma_2(0),-\gamma_2(0),-\gamma_1(0)}
=\set{\pm\gamma^{\tilde D}\pm\frac12\log\frac{1+\gamma}{1-\gamma}}.
\end{equation}

We note that $\det\tilde D_n= \det\tilde F_n=1$ and that the matrices $\tilde{D}_n$ are similar to
\begin{equation}\label{Dnprime}
D_n' = \left(
\begin{array}{cc}
1 & 0 \\
0 & \frac{1}{\sqrt{1-\gamma^2}} \\
\end{array}
\right)\tilde D_n\left(
\begin{array}{cc}
1 & 0 \\
0 & \sqrt{1-\gamma^2} \\
\end{array}
\right)=\left(
\begin{array}{cc}
0 & 1 \\
-1 & \frac{\nu_n}{\sqrt{1-\gamma^2}} \\
\end{array}
\right),
\end{equation}
which are the transfer matrices of the standard Anderson model at $E=0$, with disorder scaled by the constant $1/\sqrt{1-\gamma^2}$. It is well known and follows from F\"urstenberg's theorem that $\gamma^{\tilde D}>0$.

Noting that $(1+\gamma)/(1-\gamma) >1$ and thus $\log \frac{1+\gamma}{1-\gamma}>0$, we conclude that 
\begin{equation} \label{FirstTwo}
\gamma_{1}(0) = \gamma^{\tilde D} + \frac12\log\frac{1+\gamma}{1-\gamma}, \quad \gamma_2(0) = \left|\gamma^{\tilde D}-\frac12\log\frac{1+\gamma}{1-\gamma}\right|,
\end{equation}
and, in particular, $\gamma_1(0)> \gamma_2(0)$.

In the case $\gamma>1$ the matrices $D_n$ and $F_n$ have negative determinant, which leads us to modifying the above argument by considering products of neighboring pairs of transfer matrices. Thus let
\begin{eqnarray}
G_n & := & D_{2n} D_{2n-1} = \frac{\gamma-1}{\gamma+1} \tilde{G}_n, \\
H_n & := & F_{2n} F_{2n-1} = \frac{\gamma+1}{\gamma-1} \tilde{G}_n,
\end{eqnarray}
where
\begin{equation}\label{twosteptm}
\tilde{G}_n = \left( \begin{array}{cc} 1 & \frac{\nu_{2n-1}}{\gamma^2-1} \\ \nu_{2n} & 1 + \frac{\nu_{2n-1} \nu_{2n}}{\gamma^2-1} \end{array} \right).
\end{equation}
The matrices $\tilde{G}_n$ are i.i.d.\ with $\det \tilde{G}_n =1$.

Observe that 
\begin{equation}
C_{2n} C_{2n-1} = \left( \begin{array}{cc} G_n & 0 \\ 0 & H_n \end{array} \right),
\end{equation}
and thus
\begin{equation}\label{Lyapseteq3}
\set{\gamma_1(0),\gamma_2(0),-\gamma_2(0),-\gamma_1(0)}
=\frac{1}{2} \set{\pm\gamma^{\tilde G}\pm\frac12\log\frac{\gamma+1}{\gamma-1}}.
\end{equation}
As above, we argue that
\begin{equation}
\gamma_1(0) = \frac{1}{2} \left( \gamma^{\tilde{G}} + \frac12\log \frac{\gamma+1}{\gamma-1} \right), \quad \gamma_2(0) = \frac{1}{2} \left| \gamma^{\tilde{G}} - \frac12\log \frac{\gamma+1}{\gamma-1} \right|
\end{equation}
To conclude $\gamma_1(0)> \gamma_2(0)$, it remains to show that $\gamma^{\tilde G}>0$, which we accomplish by verifying the assumptions of F\"urstenberg's theorem.

Associated to (\ref{twosteptm}), putting $c:=\gamma^2-1$, we define the F\"urstenberg group
\begin{equation}
G=\overline{\ip{G(x,y):=\left(\begin{array}{cc} 1 & \frac{x}{c} \\ y & 1+\frac{xy}{c}\end{array}\right):x,y\in\supp\rho}}\subset\SL_2(\R).
\end{equation}
Pick $\{a, b\} \subset \supp \rho$ with $a \not= b$. Then a short calculation shows that
\begin{equation}
G(a,a)^{-1} G(b,a) = \left( \begin{array}{cc} 1 & \frac{b-a}{c} \\ 0 & 1 \end{array} \right)\in G.
\end{equation}

Taking successive powers of this matrix causes the upper-right entry to grow, showing non-compactness of $G$. One also proves strong irreducibility in much the same manner as it is done for the Anderson model (see, e.g., Proposition~IV.4.25 in \cite{CL1990}):

Let $v=(\alpha,\beta)^t\in\R^2\setminus\set{0}$. If $\beta\ne0$, one checks that the three vectors
\begin{equation}
v,\quad \left( \begin{array}{cc} 1 & \frac{b-a}{c} \\ 0 & 1 \end{array} \right)v,\quad \left( \begin{array}{cc} 1 & \frac{b-a}{c} \\ 0 & 1 \end{array} \right)^2v
\end{equation}
are pairwise non-collinear. For $(\alpha,\beta) = (1,0)$ one takes $y\in\set{a,b}\setminus\set{0}$, defines $w=G(a,y)v=(1,y)^t$, and finds that the vectors
\begin{equation}
w,\quad \left( \begin{array}{cc} 1 & \frac{b-a}{c} \\ 0 & 1 \end{array} \right)w,\quad \left( \begin{array}{cc} 1 & \frac{b-a}{c} \\ 0 & 1 \end{array} \right)^2 w
\end{equation}
are pairwise non-collinear. Thus, F\"urstenberg's theorem gives $\gamma^{\tilde G}>0$, as desired. This completes the proof.
\end{proof}

We now know that $\gamma_1(0)>\gamma_2(0)$, but deciding if $\gamma_2(0)>0$ is more subtle and depends on the specific choice of parameters of the model. We will only discuss this for the case $\gamma \in (0,1)$, where we can cite known facts for the Anderson model, but similar reasoning should apply for $\gamma>1$.

If $\gamma \in (0,1)$, then we conclude from (\ref{FirstTwo}) that $\gamma_2(0)>0$ if and only if
\begin{equation}
\gamma^{\tilde{D}} \not= \frac{1}{2} \log \frac{1+\gamma}{1-\gamma}.
\end{equation}
As we mentioned after (\ref{Dnprime}), we know that $\gamma^{\tilde{D}}>0$, but it depends on the specifics of the distribution $\rho$ of $\nu_n$ if $\gamma^{\tilde{D}}$ may happen to coincide with $\frac{1}{2} \log \frac{1+\gamma}{1-\gamma}$.

To illustrate this further, introduce an extra disorder parameter $\alpha$ and let $\pm \Gamma_0(\gamma,\alpha)$ be the Lyapunov exponents of the i.i.d.\ matrices
\begin{equation} 
D_n'(\alpha)=\left(
\begin{array}{cc}
0 & 1 \\
-1 & \frac{\alpha\nu_n}{\sqrt{1-\gamma^2}} \\
\end{array}
\right).
\end{equation}
We claim that
\begin{eqnarray}
& \alpha \mapsto  \Gamma_0(\gamma,\alpha) \quad\text{is continuous},& \\
& \lim_{\alpha\to0}\Gamma_0(\gamma,\alpha) = 0, & \\
& \lim_{\alpha\to\infty}\Gamma_0(\gamma,\alpha) = \infty. &
\end{eqnarray}
The first result is proven in the same way as showing continuity of the Lyapunov exponent as a function of energy, compare, e.g., Corollary~V.4.8 of \cite{CL1990} and its proof. The second and third results follow from asymptotic relations of the Lyapunov exponent for the Anderson model. For this, we refer to Sections V.11 and VI.14 in \cite{PF1992}.

Since $\frac12\log\frac{1+\gamma}{1-\gamma}>0$, the Intermediate Value Theorem yields for any given non-trivial distribution $\rho$ of the $\nu_n$ a choice of $\alpha$ (and hence a re-scaled distribution $\rho_{\alpha} = \rho(\cdot/\alpha)$) such that $\gamma_2(0)=0$. We also get $\gamma_2(0)>0$ if $\alpha$ is sufficiently large or sufficiently small. If $\Gamma_0(\gamma,\alpha)$ were strictly monotone in $\alpha$ (which we haven't checked), then $\gamma_2(0)$ would vanish for a unique critical value of $\alpha$ and, in this sense, $\gamma_2(0)>0$ would be the generic case.

\subsection{Other types of randomness} \label{sec:otherrand}

In Theorems~\ref{thm:main} and \ref{Zariskidenseneq0} we have focused on the special case of the model (\ref{blockmatrixhatM}) in which the magnetic field strength $\nu_n$ is random. But, of course, the more general Theorem~\ref{BJMdynloc} together with the Gol'dsheid-Margulis criterion can be applied to other types of randomness, as long as Zariski-denseness of the F\"urstenberg groups can be verified for suitable energy intervals. Without stating detailed results, we discuss here what we found for the cases where in the model (\ref{blockmatrixhatM}) either the coupling constants $\mu_n$ or the anisotropy parameters $\gamma_n$ are chosen to be random. This was done, in part, by using numerical help. Note that cases with randomness in two or all three of the parameter sequences are easier, as this gives larger F\"urstenberg groups which are more likely to be Zariski-dense.

First, note that the zero-energy singularity will persist for all these cases, as the reducibility of the transfer matrices observed in the proof of Theorem~\ref{thm:zeroenergy} holds in general for the transfer matrices $A_k^0$ from (\ref{tmprop}) with $V_k$ and $S_k$ as in (\ref{blockmatrixhatM}).

In the case of random i.i.d.\ couplings $\mu_n$ taking at least two different values, keeping $\nu_n=\nu$ and $\gamma_n=\gamma\in(0,1)$ constant, we found, using a construction similar to the proof of Theorem~\ref{Zariskidenseneq0}, that the F\"urstenberg group is Zariski-dense for all $E\notin\set{0,\pm\nu}$. But we have not checked if the new critical energies $\pm \nu$ lead to interesting phenomena and if they might disappear if the support of the single-site distribution of the $\mu_n$ has more than two points.

In the case of random i.i.d.\ anisotropies $\gamma_n$ with support contained in $(0,1)$, keeping $\nu_n=\nu$ and $\mu_n=1$ constant, the analysis is more complex. The standard canceling trick (\ref{tmcancel}) no longer results in a matrix of the form $M(Q)$ but rather a diagonal matrix (with positive diagonal entries), so we may no longer use Lemma \ref{MQlemma} to cleverly switch back and forth between the Lie algebra and Lie group. Instead, under the assumption that the single-site distribution takes two different values $a$ and $b$, ten elements in the Lie algebra were constructed by appropriately conjugating the logarithm of this diagonal matrix, which is known to lie in the Lie algebra. Numerically, for any non-zero $E$, a transcendental equation (arising as a 10$\times$10-determinant for the ten constructed elements of the Lie algebra) has no more than two roots in $b$ for any fixed $a$. This would indicate that Zariski-denseness holds for all $E\ne 0$ if the single-site distribution of the $\gamma_n$ is supported on at least four points, as in this case for any value of the first point at least one of the three other values would yield non-zero determinant.

\section{Dynamical localization for the XY chain} \label{sec:discussion}

Our interest in proving dynamical localization for the random block operators $\hat{M}_n$ defined in (\ref{blockmatrixM}) is largely motivated by the fact that, as proven in \cite{HSS2012}, this implies a many-body dynamical localization property for the disordered XY spin chain. In this concluding section we discuss this connection between one-body and many-body localization in more detail.

The identity (\ref{CMC}), derived via the Jordan-Wigner transform, can be read as relating the many-body Hamiltonian $H_n$ to the effective one-particle Hamiltonian $\hat{M}_n$. In connection with the CAR (\ref{CAR}) this leads to a relation between the Heisenberg dynamics of $H_n$ and the Schr\"odinger dynamics of $\hat{M}_n$,
\begin{equation}\label{tauM}
\tau_t^n(c_j)=\sum_{k=1}^n \hat M_{n,j,k}(2t)c_k+\sum_{k=1}^n \hat M_{n,j,n+k}(2t)c_k^*,\quad\quad j=1,...,n,
\end{equation}
see, e.g., \cite{HSS2012} for a proof. Here $\tau_t^n(a):=e^{itH_n}ae^{-itH_n}$ is the Heisenberg dynamics for an operator $a$ on $\bigotimes^n \C^2$, and $\hat M_{n,j,k}(t):=(e^{-it\hat M_n})_{j,k}$ is the $(j,k)$-th matrix element of the time evolution of $\hat M_n$. 

The identity (\ref{tauM}) is the key fact which allows to turn results on dynamical localization for $\hat{M}_n$ into dynamical localization properties of $H_n$. More precisely, we have the following result. Here the notation $\mathcal{A}_N$, for $N\subset [1,n]$, represents the class of tensor product operators on $\bigotimes^n \C^2$ which act trivially on sites outside $N$. For more background and precise definitions, including discussion of the interpretation of bounds of the form (\ref{LRbound}) below as {\it zero-velocity Lieb-Robinson bounds} for many-body systems, we refer to \cite{HSS2012}.

\begin{thm}\label{LRthm}
Suppose there exist $\zeta\in(0,1)$ and constants $C>0$, $\eta>0$ such that for all $n\in\N$ and $j,k\in[1,n]$,
\begin{equation}\label{subDL}
\E\left(\sup_{t\in\R} \|P_j e^{-itM_n} P_k^*\| \right)\le Ce^{-\eta\abs{j-k}^\zeta}.
\end{equation}
Then for every $\eps\in(0,\eta)$, there exists a constant $C'=C'(\eta,\eps,\zeta)>0$ such that 
\begin{equation}\label{LRbound}
\E\left(\sup_{t\in\R}\|[\tau_t^n(A),B]\|\right)\le C^\prime\norm{A}\norm{B}e^{-(\eta-\eps)(k-j)^\zeta}
\end{equation}
for all $1\le j<k$, $n\ge k$, $A\in\mathcal{A}_j$ and $B\in\mathcal{A}_{[k,n]}$. Furthermore, if (\ref{subDL}) holds with $\zeta=1$, then (\ref{LRbound}) holds with $\eps=0$.
\end{thm}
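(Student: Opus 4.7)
The plan is to convert the single-particle dynamical localization bound~(\ref{subDL}) into a many-body commutator estimate via the exact identity~(\ref{tauM}), following the fermionic zero-velocity Lieb--Robinson approach of~\cite{HSS2012}. Since $\hat M_n$ is unitarily equivalent to $M_n$ through the canonical reordering from Section~\ref{sec:xychain}, each scalar entry $\hat M_{n,\alpha,\beta}(2t)$ with $\alpha\in\{l,n+l\}$, $\beta\in\{m,n+m\}$ is bounded by the $2\times 2$ block norm $\|P_l e^{-2itM_n}P_m^*\|$, so~(\ref{subDL}) directly controls the coefficients appearing in~(\ref{tauM}). The goal is to prove the fermionic commutator estimate
\begin{equation*}
\|[\tau_t^n(A),B]\| \le C\|A\|\|B\| \sum_{l=1}^{j}\sum_{m=k}^{n}\|P_l e^{-2itM_n}P_m^*\|
\end{equation*}
and then sum.

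To derive this estimate, first expand $A\in\mathcal{A}_j$ via Jordan--Wigner as a polynomial in $\{c_l,c_l^*\}_{l=1}^j$; the coefficients are bounded by a universal multiple of $\|A\|$ using orthonormality of Majorana monomials in the Hilbert--Schmidt inner product. Apply $\tau_t^n$ and substitute~(\ref{tauM}) together with its adjoint into each factor, yielding a polynomial in the full fermion algebra whose coefficients are products of entries of $e^{-2it\hat M_n}$. Write $\tau_t^n(A)=R_{\mathrm{in}}(t)+R_{\mathrm{out}}(t)$, where $R_{\mathrm{in}}$ collects those fermionic monomials all of whose indices lie in $[1,k-1]$. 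Any such monomial, together with its Jordan--Wigner strings extending at most to site $k-1$, is supported on $[1,k-1]$, hence $R_{\mathrm{in}}\in\mathcal{A}_{[1,k-1]}$ commutes with $B\in\mathcal{A}_{[k,n]}$. Therefore $[\tau_t^n(A),B]=[R_{\mathrm{out}}(t),B]$, and a Leibniz/telescoping argument, together with $\|c_m\|,\|c_m^*\|\le 1$, produces the bound above, since every monomial in $R_{\mathrm{out}}$ carries at least one factor $\hat M_{n,\alpha,\beta}(2t)$ with $\alpha\in\{l,n+l\}$, $l\le j$, and $\beta\in\{m,n+m\}$, $m\ge k$.

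Taking $\sup_t$, expectation, and applying~(\ref{subDL}) reduces matters to summing
\begin{equation*}
\sum_{l\le j,\,m\ge k} e^{-\eta(m-l)^\zeta}.
\end{equation*}
For $\zeta=1$ this double sum factors as a product of two geometric series in $e^{-\eta}$, producing $C'e^{-\eta(k-j)}$ with no loss in the exponent. For $\zeta\in(0,1)$ the standard subexponential summation bound $\sum_{d\ge d_0} e^{-\eta d^\zeta}\le C_{\eps}e^{-(\eta-\eps/2)d_0^\zeta}$, applied successively in $m$ with $l$ fixed and then in $l$, absorbs the polynomial prefactors and yields $C''e^{-(\eta-\eps)(k-j)^\zeta}$. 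The main obstacle is the fermionic Lieb--Robinson estimate of the previous step, particularly the bookkeeping of fermionic parity and Jordan--Wigner strings when $A$ has an odd-parity component; decomposing $A$ into parity-definite pieces (both preserved by $\tau_t^n$ since $H_n$ is parity-even) separates the signs and makes the in/out split compatible with the support argument used to kill $[R_{\mathrm{in}},B]$. Since this bookkeeping is already carried out in~\cite{HSS2012}, the proof amounts to citing their fermionic Lieb--Robinson bound and replacing their summation step (done for $\zeta=1$) by the subexponential summation above to extend the conclusion to $\zeta\in(0,1)$.
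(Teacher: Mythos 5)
Your proposal is correct and follows essentially the same route as the paper: invoke the fermionic Lieb--Robinson argument of~\cite{HSS2012} (via the identity~(\ref{tauM}) and an in/out decomposition of $\tau_t^n(A)$) to reduce the commutator bound to a double sum over propagator entries, recover the $\zeta=1$ case as in~\cite{HSS2012}, and replace the geometric-series tail estimate with a subexponential/integral-comparison tail estimate for $\zeta\in(0,1)$, accounting for the $\eps$-loss. This matches the paper's proof, which cites Theorem~3.2 of~\cite{HSS2012} for $\zeta=1$ and notes that the only modification needed for $\zeta<1$ is precisely this change in the summation step.
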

\begin{proof}
The last statement regarding $\zeta=1$ and $\eps=0$ is exactly Theorem 3.2 of \cite{HSS2012}. The proof of the result for $\zeta<1$ requires a slight modification, and is presented in Chapter 2 of \cite{Chap2013}. The proof with $\zeta=1$ uses that the tail $\sum_{j=k}^\infty e^{-cj}$ of a geometric series of exponential terms is proportional to the first term $e^{-ck}$. For $\zeta\in(0,1)$, one has to properly estimate the tail $\sum_{j=k}^\infty e^{-cj^\zeta}$ in terms of the first term $e^{-ck^\zeta}$, which is done via an integral comparison argument, leading to a slight loss of the decay rate in form of an $\eps>0$. Also note that \cite{HSS2012} and \cite{Chap2013} state their results with $\sup_t |\hat{M}_{n,j,k}(t)| + \sup_t |\hat{M}_{n,j,n+k}(t)|$ instead of $\sup_t \|P_j e^{-itM_n} P_k^*\|$ in (\ref{subDL}), but these two terms are equivalent. 
\end{proof}

There are two regimes in which (\ref{subDL}) has been verified previously, in both cases with $\zeta=1$:

(i) For the {\it isotropic} XY chain, i.e.\ for $\gamma=0$, the block operator $M_n$ reduces to the one-dimensional Anderson model, for which (\ref{subDL}) with $\zeta=1$ is well known (under the assumption (\ref{eq:coeffcond2}) on the single-site distribution). The corresponding Lieb-Robinson bound
\begin{equation} \label{eq:LRzeta1}
\E\left(\sup_{t\in\R}\|[\tau_t^n(A),B]\|\right)\le C^\prime\norm{A}\norm{B}e^{-\eta(k-j)},
\end{equation}
with $A$ and $B$ as above, was found in \cite{HSS2012}.

(ii) In the general anisotropic case $\gamma \not= 0$ the large disorder regime (replace $\nu_n$ with $\lambda \nu_n$ and $\lambda>0$ sufficiently large) is covered by a special case of the results in \cite{ESS2012}, mentioned earlier in Section~\ref{sec:xychain}. There a dynamical localization bound of the form (\ref{subDL}) with $\zeta=1$ is proven for a much larger class of random block operators, also covering multi-dimensional block operators, but requiring large disorder as well as sufficiently smooth distribution of the random parameters.  Again, via Theorem~\ref{LRthm}, this implies dynamical localization in the form (\ref{eq:LRzeta1}).

The following new result on zero-velocity Lieb-Robinson bounds for the disordered XY chain can be found by combining Theorem~\ref{thm:main} with Theorem~\ref{LRthm}. It allows for singular distributions of the $\nu_j$, but still requires that they are large in suitable sense (but not in the sense of large disorder as in \cite{ESS2012}).

\begin{thm} \label{thm:dynlocanisoXY}
Assume that the parameters in the XY chain Hamiltonian (\ref{xychain1}) satisfy (\ref{eq:coeffcond1}) and (\ref{eq:coeffcond2}) and, in addition, that supp$\,\rho$, the compact support of the distribution of the $\nu_j$, is contained either in $(2,\infty)$ or in $(-\infty,-2)$. 

Then for every $\zeta \in (0,1)$ there exists $C=C(\zeta)<\infty$ such that
\begin{equation} \label{eq:newLRbound}
\E \left( \sup_{t\in \R} \| [ \tau_t^n(A), B]\| \right) \le C \|A\| \|B\| e^{-(k-j)^{\zeta}}
\end{equation}
for all $1\le j \le k \le n$, $A\in\mathcal{A}_j$ and $B\in\mathcal{A}_{[k,n]}$. 
\end{thm}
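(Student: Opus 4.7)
The plan is to assemble Theorem~\ref{thm:dynlocanisoXY} from two results already in hand: Theorem~\ref{thm:main}, which gives dynamical localization for $M_n$ with the spectral projection $\chi_J(M_n)$ onto a compact interval $J \subset \R \setminus \{0\}$, and Theorem~\ref{LRthm}, which converts a bound of the form (\ref{subDL}) into the desired Lieb-Robinson bound. The only obstruction is the presence of $\chi_J(M_n)$: Theorem~\ref{LRthm} needs the \emph{un}projected bound on $\|P_j e^{-itM_n} P_k^*\|$. The hypothesis on $\supp\rho$ is tailored precisely to remove this obstruction by forcing a deterministic spectral gap of $M_n$ at zero via Proposition~\ref{specsymlemma}.

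The first step is to establish the gap. Consider the case $\supp\rho \subset (2,\infty)$ and set $c := \min\supp\rho > 2$. The diagonal block $A_n$ in (\ref{matrixA}) decomposes as $L_n + \mathrm{diag}(\nu_j)$ where $L_n$ has zero diagonal and $-1$ on the off-diagonals, so $\|L_n\|\le 2$ and hence $A_n \ge (c-2)I$. Proposition~\ref{specsymlemma} then yields $\sigma(\hat M_n) \cap (-\lambda,\lambda) = \emptyset$ with $\lambda := c-2 > 0$; the case $\supp\rho \subset (-\infty,-2)$ is symmetric. Since $M_n$ is unitarily equivalent to $\hat M_n$, the same gap holds for $M_n$, uniformly in $n$ and almost surely.

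The second step is to upgrade Theorem~\ref{thm:main} to the form (\ref{subDL}). Let $D$ be a deterministic upper bound for $\|M_n\|$ (available since $\rho$ has compact support and the other parameters are constant), and set $J_\pm := \pm[\lambda, D]$, both compact subsets of $\R \setminus \{0\}$. Because $\sigma(M_n) \subset J_+ \cup J_-$, we have $\chi_{J_+}(M_n) + \chi_{J_-}(M_n) = I$, so
\[
P_j e^{-itM_n} P_k^* = P_j e^{-itM_n}\chi_{J_+}(M_n) P_k^* + P_j e^{-itM_n}\chi_{J_-}(M_n) P_k^*.
\]
Applying Theorem~\ref{thm:main} to each of $J_\pm$ separately, for any chosen $\zeta_0 \in (0,1)$ we obtain constants $C_0,\eta_0 > 0$ so that $\E(\sup_t \|P_j e^{-itM_n}P_k^*\|) \le 2C_0 e^{-\eta_0 |j-k|^{\zeta_0}}$, which is precisely the hypothesis (\ref{subDL}) of Theorem~\ref{LRthm}.

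The final step is to feed this into Theorem~\ref{LRthm} and do bookkeeping. Given the target $\zeta \in (0,1)$, I would first pick $\zeta_0 \in (\zeta,1)$, apply the previous step to produce (\ref{subDL}) at rate $(\eta_0,\zeta_0)$, and then invoke Theorem~\ref{LRthm} (with some $\eps \in (0,\eta_0)$) to get $\E(\sup_t \|[\tau_t^n(A),B]\|) \le C'\|A\|\|B\| e^{-(\eta_0-\eps)(k-j)^{\zeta_0}}$. The prefactor $\eta_0 - \eps > 0$ in the exponent and the gap $\zeta_0 - \zeta > 0$ are absorbed into the final constant via the elementary inequality $e^{-\alpha x^{\zeta_0}} \le C(\alpha,\zeta_0,\zeta)\,e^{-x^{\zeta}}$ valid for all $x \ge 0$. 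There is no truly hard step: the whole content of the theorem is the observation that the hypothesis $\supp\rho \subset (2,\infty) \cup (-\infty,-2)$ supplies the spectral gap at $E=0$ that is needed to discard the projection $\chi_J(M_n)$ and thereby bridge Theorem~\ref{thm:main} with Theorem~\ref{LRthm}.
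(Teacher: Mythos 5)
Your proposal is correct and follows essentially the same route as the paper's proof: use Proposition~\ref{specsymlemma} to get a uniform deterministic spectral gap $(-\lambda,\lambda)$ at zero from the hypothesis on $\supp\rho$, write the identity as a sum of spectral projections onto two compact intervals in $\R\setminus\{0\}$ so that Theorem~\ref{thm:main} applies to each, obtain (\ref{subDL}), and then invoke Theorem~\ref{LRthm}, absorbing the constants $\eta$ and $\eta-\eps$ in the exponent by a slight decrease of the stretch exponent $\zeta$. The only cosmetic difference is that you pre-emptively choose $\zeta_0>\zeta$ before applying Theorem~\ref{thm:main}, whereas the paper applies it at $\zeta$ and reduces afterwards; both are the same bookkeeping.
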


\begin{proof}
The crucial fact is that under the additional assumption supp$\,\rho \subset (2,\infty)$ or supp$\,\rho \subset (-\infty,-2)$ the operators $M_n$ have a spectral gap at $E=0$, allowing to remove the spectral projection $\chi_J(M_n)$ in (\ref{DL2}). More precisely, choose
\begin{equation}
\lambda = \left\{ \begin{array}{ll} a-2, & \mbox{if $a= \min(\rm{supp}\,\rho) > 2$}, \\ -2-b, & \mbox{if $b= \max(\rm{supp}\,\rho) < -2$}. \end{array} \right.
\end{equation}
Then it is easily seen that the diagonal block $A_n$ of $\hat{M}_n$ in (\ref{blockmatrixM}) almost surely satisfies $\sigma(A_n) \subset [\lambda,\infty)$ or $\sigma(A_n) \subset (-\infty,\lambda]$, respectively. By (a finite volume version) of Proposition~\ref{specsymlemma} this implies $\sigma(M_n) \cap (-\lambda, \lambda) = \emptyset$ for all $n$ and almost all $(\nu_j)$. Thus there are compact intervals $J_1 \subset (0,\infty)$ and $J_2 \subset (-\infty,0)$ (in fact, $J_2 = - J_1$), such that $\sigma(M_n) \subset J_1 \cup J_2$. Thus, by Theorem~\ref{thm:main}, for all $\zeta \in (0,1)$ there are $C<\infty$ and $\eta>0$ such that 
\begin{eqnarray}
\lefteqn{\E \left( \sup_t \|P_j e^{-itM_n} P_k^*\| \right)} \\
& \le & \E \left( \sup_t \|P_j e^{-itM_n} \chi_{J_1}(M_n) P_k^*\| \right) + \E \left( \sup_t \|P_j e^{-itM_n} \chi_{J_2}(M_n)P_k^*\| \right) \nonumber \\
& \le & C e^{-\eta|j-k|^{\zeta}} \nonumber
\end{eqnarray}
for all $1\le j \le k \le n$. Now Theorem~\ref{LRthm} implies (\ref{eq:newLRbound}), where $\eta-\varepsilon$ in (\ref{LRbound}) can be absorbed into the constant $C$ by slightly reducing $\zeta$.
\end{proof}

Removing $\chi_J(M_n)$ in (\ref{DL2}) in situations where $E=0$ does not lie in a spectral gap (and thus removing the extra assumption on supp$\,\rho$ in Theorem~\ref{thm:dynlocanisoXY}) is a more challenging problem.

Our discussion in Section~\ref{subsec:crit} is meant to be a first step towards understanding this, at least for generic single-site distributions $\rho$. We have shown there that, despite the lack of irreducibility of the F\"urstenberg group $G_{\mu_0}$, one can still show in many cases that $\gamma_1(0)>\gamma_2(0)>0$. However, this alone is not enough to show that $M_n$ is dynamically localized at {\it all} energies, including near zero. Irreducibility of the F\"urstenberg groups $G_{\mu_E}$ enters the proof of Theorem~\ref{thm:main} one more time, namely in the proof of H\"older continuity (\ref{gammaHolder}) of the Lyapunov exponents. The argument in \cite{CL1990} which we have referred to in this context relies strongly on the uniqueness of invariant measures associated with the transfer matrices in (\ref{eq:defLyap}), which in turn depends on irreducibility. 

It is thus a non-trivial and interesting question if H\"older continuity of the Lyapunov exponents can be shown near the critical energy $E=0$, at which the F\"urstenberg group becomes reducible. 

We plan to address questions of this type in \cite{IsingPaper}, where we will first consider the case of the Ising model, i.e.\ $\gamma=1$ in the XY chain. The reason that this was excluded in the present work is that in this case the matrices $S(\gamma)$ in (\ref{defSgamma}) are {\it not} invertible. It turns out, however, that in this case the block Jacobi matrices $M_n$ can be reduced, via reducing $S(1)$ to Jordan form, to standard Jacobi matrices (a fact well known in the physics literature since \cite{Pf1970}). For the latter one can show dynamical localization at non-zero energies with similar (in fact simpler) methods than were used here, see \cite{Chap2013}. But $E=0$ is once again a critical energy at which the transfer matrices become reducible (in fact, diagonal). In \cite{IsingPaper} we will provide an explicit argument that, under suitable assumptions on the single-site distribution $\rho$, the Lyapunov exponents are Lipschitz continuous in a neighborhood of zero, thus allowing to conclude a dynamical localization bound of the form (\ref{LRbound}) for the Ising model in random transversal field (and not requiring that $E=0$ lies in a spectral gap of the effective one-particle Hamiltonian).


\bigskip


\begin{thebibliography}{A}

\bibitem{Baskoetal} D.\ M.\ Basko, I.\ L.\ Aleiner and B.\ L.\ Altshuler, Metal-insulator transition in a weakly interacting many-electron system with localizad single-particle states, Annals of Physics {\bf 321} (2006), 1126--1205 

\bibitem{BL1985} P.\ Bougerol and J.\ Lacroix, Products of random matrices with applications to Schr\"odinger operators, Birkh\"auser, Boston, 1985

\bibitem{BS2006} H.\ Boumaza and G.\ Stolz, Positivity of Lyapunov exponents for Anderson-type models on two coupled strings, Elec.~J.~Diff.~Eq.~{\bf 2007}, no. 47, (2007), 1--18

\bibitem{Bou2009} H.\ Boumaza, Localization for a matrix-valued Anderson model, Math.~Phys.~Anal.~Geom.\ {\bf 12}, no.~3 (2009), 225--286

\bibitem{BM2013} H.\ Boumaza and L.\ Marin, Absence of absolutely continuous spectrum for random scattering zippers, Preprint, arXiv:1303.3116

\bibitem{BravyiKoenig} S.\ Bravyi and R.\ K\"onig, Disorder-assisted error correction in Majorana chains, Comm.~Math.~Phys.~{\bf 316} (2012), 641--692

\bibitem{CKM1987} R.\ Carmona, A.\ Klein and F.\ Martinelli, Anderson localization for Bernoulli and other singular potentials, Comm.~Math.~Phys.~{\bf 108} (1987), 41--66

\bibitem{CL1990} R. Carmona and J. Lacroix, Spectral theory of random Schr\"odinger operators, {\it Probability Theory and its Applications}, Birkh\"auser, Boston, 1990

\bibitem{Chap2013} J.\ Chapman, Spectral Properties of Random Block Operators, PhD Thesis, University of Alabama at Birmingham, 2013, electronically available at http://gradworks.umi.com/3561259.pdf

\bibitem{IsingPaper} J.\ Chapman and G.\ Stolz, Dynamical localization for the quantum Ising model in random field, in preparation

\bibitem{CS1983} W.\ Craig and B.\ Simon, Log H\"older continuity of the integrated density of states for stochastic Jacobi matrices, Comm.~Math.~Phys.~{\bf 90} (1983), 207--218

\bibitem{CS1983S} W.\ Craig and B.\ Simon, Subharmonicity of the Lyaponov index, Duke Math.~J.~{\bf 50}, no. 2, (1983), 551--560

\bibitem{ES2013} A.\ Elgart and D.\ Schmidt, Eigenvalue statistics for random block operators, Preprint, arXiv:1306.3459 

\bibitem{ESS2012} A.\ Elgart, M.\ Shamis, and S.\ Sodin, Localisation for non-monotone Schr\"odinger operators, Preprint, arXiv:1201.2211

\bibitem{GM2013} M.\ Gebert and P.\ M\"uller, Localization for random block operators, Oper.~Theory Adv.~Appl.~{\bf 232} (2013), 229--246

\bibitem{GeKle2001} F. Germinet and A. Klein, Bootstrap multiscale analysis and localization in random media, Comm.~Math.~Phys.~{\bf 222} (2001), 415--448

\bibitem{GeKlo2012} F.\ Germinet and F.\ Klopp, Spectral statistics for random Schr\"odinger operators in the localized regime, Preprint, arXiv:1011.1832

\bibitem{GoMa1989} I.\ Gol'dsheid and G.\ Margulis, Lyapunov indices of a product of random matrices, Russian Math.~Survey~{\bf 44:5} (1989), 11--71

\bibitem{HSS2012} E.\ Hamza, R.\ Sims and G.\ Stolz, Dynamical Localization in 
Disordered Quantum Spin Systems, Commun.~Math.~Phys.~{\bf 315} (2012), 215--239

\bibitem{Kir1989} W.\ Kirsch, Random Schr\"odinger operators. Schr\"odinger operators, Proc. Nord. Summer Sch. Math., Sandbjerg Slot, Sonderborg/Denmark 1988, Lect. Notes Phys.~{\bf 345} (1989), 264--370

\bibitem{Kir2007} W.\ Kirsch, An invitation to random Schr\"odinger operators, Panor.~Synth\'esis {\bf 25}, Random Schr\"odinger operators, 1--119, Soc.~Math.~France, Paris, 2008

\bibitem{KMM2010} W.\ Kirsch, B.\ Metzger and P.\ M\"uller, Random block operators, J.~Stat.~Phys.~{\bf 143}, no. 6, (2011), 1035--1054

\bibitem{Kitaev} A.\ Yu.\ Kitaev, Unpaired Majorana fermions in quantum wires, Phys.-Usp.\ {\bf 44} (2001), 131--136, see also arXiv:cond-mat/0010440

\bibitem{Klein2008} A.\ Klein, Multiscale analysis and localization of random operators. Random Schr\"odinger operators, 121-159, Panor.~Synth\'esis, {\bf 25}, Soc.~Math.~France, Paris, 2008

\bibitem{KLS1990} A.\ Klein, J.\ Lacroix and A.\ Speis, Localization for the Anderson model on a strip with singular potentials, J.~Funct.~Anal.~{\bf 94} (1990), 135--155

\bibitem{KoSi1988} S.\ Kotani and B.\ Simon, Stochastic Schr\"odinger operators and Jacobi matrices on the strip, Comm.~Math.~Phys.~{\bf 119} (1988), 403--429

\bibitem{LSM1961} E.\ Lieb, T.\ Schultz and D.\ Mattis, Two soluble models of an antiferromagnetic chain, Annals of Physics~{\bf 16} (1961), 407--466

\bibitem{OH} V.\ Oganesyan and D.\ A.\ Huse, Localization of interacting fermions at high temperature, Phys.~Rev.~B~{\bf 75} (2007), 155111 

\bibitem{PalHuse} A.\ Pal and D.\ A.\ Huse, The many-body localization phase transition, Phys.~Rev.~B {\bf 82} (2010), 174411 

\bibitem{PF1992} L.\ Pastur and A.\ Figotin, Spectra of Random and Almost-Periodic Operators, Springer-Verlag, Berlin, 1992

\bibitem{Pf1970} P.\ Pfeuty, The one-dimensional Ising model with a transverse field, Annals of Physics {\bf 57} (1970), 79--90

\bibitem{SchuBa1} H.\ Schulz-Baldes, Rotation numbers for Jacobi matrices with matrix entries, Math.~Phys.~Electron.~J.~{\bf 13} (2007), Paper 5, 40 pp.

\bibitem{SchuBa2} H.\ Schulz-Baldes, Geometry of Weyl theory for Jacobi matrices with matrix entries, J.~Anal.~Math.~{\bf 110} (2010), 129--165

\bibitem{Stolz2011} G.\ Stolz, An introduction to the mathematics of Anderson localization. Entropy and the quantum II, 71--108, Contemp.~Math.~{\bf 552}, Amer.~Math.~Soc., Providence, RI, 2011

\bibitem{ZPP} M.\ Znidaric, T.\ Prosen and P.\ Prelovsek, Many-body localization in the Heisenberg XXZ magnet in a random field, Phys.~Rev.~B {\bf 77} (2008), 064426 



\end{thebibliography}
\end{document}